\def\compactify{\itemsep=0pt \topsep=0pt \partopsep=0pt \parsep=0pt}
\let\latexusecounter=\usecounter
\newcommand{\eps}{\epsilon}
\newtheorem{theorem}{Theorem}[section]
\newtheorem{lemma}[theorem]{Lemma}
\newtheorem{corollary}[theorem]{Corollary}
\newtheorem{definition}[theorem]{Definition}
\lstdefinestyle{nonumbers}{numbers=none}
\lstdefinestyle{numbers}{numbers=left, numberstyle=\footnotesize, 
stepnumber=1, numbersep=7pt, 
xleftmargin=10pt}
\definecolor{gray}{RGB}{112,138,144}
\definecolor{orange}{RGB}{255,128,0}
\def\card#1{|#1|}
\def\dist{\mathrm{dist}}
\def\cost{\mathrm{Cost}}
\def\sol{\mathrm{Sol}}
 \gdef\xxxmark{%
   \expandafter\ifx\csname @mpargs\endcsname\relax 
     \expandafter\ifx\csname @captype\endcsname\relax 
       \marginpar{xxx}
     \else
       xxx 
     \fi
   \else
     xxx 
   \fi}
 \gdef\xxx{\@ifnextchar[\xxx@lab\xxx@nolab}
 \long\gdef\xxx@lab[#1]#2{{\bf [\xxxmark #2 ---{\sc #1}]}}
 \long\gdef\xxx@nolab#1{{\bf [\xxxmark #1]}}
 \long\gdef\xxx@lab[#1]#2{}\long\gdef\xxx@nolab#1{}%
\DeclareMathOperator*{\argmin}{arg\,min}
\title{Simultaneous Nearest Neighbor Search\footnote{This work was in part supported by  NSF grant CCF 1447476~\cite{grant} and the Simons Foundation.}}
\author{Piotr Indyk \\MIT \\indyk@mit.edu \and Robert Kleinberg \\Cornell and MSR \\rdk@cs.cornell.edu \and Sepideh Mahabadi \\MIT \\mahabadi@mit.edu \and Yang Yuan \\Cornell University \\yy528@cornell.edu}
\date{}
\begin{document}

\maketitle

\begin{abstract}
Motivated by applications in computer vision and databases, we introduce and study the Simultaneous Nearest Neighbor Search (SNN) problem. Given a set of data points, the goal of SNN is to design a data structure that, given a {\em collection} of queries, finds a {\em collection} of close points that are ``compatible'' with each other. Formally, we are given $k$ query points $Q=q_1,\cdots,q_k$, and a compatibility graph $G$ with vertices in $Q$, and the goal is to return data points $p_1,\cdots,p_k$ that minimize (i) the weighted sum of the distances from $q_i$ to $p_i$ and (ii) the weighted sum, over all edges $(i,j)$ in the compatibility graph $G$, of the distances between $p_i$ and $p_j$. The problem has several applications in computer vision and databases, where one wants to return a set of {\em consistent} answers to multiple related queries. Furthermore, it generalizes several well-studied computational problems, including Nearest Neighbor Search, Aggregate Nearest Neighbor Search and the 0-extension problem.

In this paper we propose and analyze the following general two-step method for designing efficient data structures for SNN. In the first step, for each query point $q_i$ we find its (approximate) nearest neighbor point $\hat{p}_i$; this can be done efficiently using existing approximate nearest neighbor structures. In the second step, we solve an off-line optimization problem over sets $q_1,\cdots,q_k$ and $\hat{p}_1,\cdots,\hat{p}_k$; this can be done efficiently given that $k$ is much smaller than $n$. Even though $\hat{p}_1,\cdots,\hat{p}_k$  might not constitute the optimal answers to queries $q_1,\cdots,q_k$, we show that, for the unweighted case, the resulting algorithm satisfies a $O(\log k/\log \log k)$-approximation guarantee. Furthermore, we show that the approximation factor can be in fact reduced to a constant for compatibility graphs frequently occurring in practice, e.g., 2D grids, 3D grids or planar graphs. 

Finally, we validate our theoretical results by preliminary experiments. In particular, we show that the ``empirical approximation factor'' provided by the above approach is very close to 1.
 \end{abstract}

\section{Introduction}

The nearest neighbor search (NN) problem is defined as follows: given a collection $P$ of $n$ points, build a data structure that, given any query point from some set $Q$, reports the data point closest to the query. The problem is of key importance in many applied areas, including computer vision, databases, information retrieval, data mining, machine learning, and signal processing. The nearest neighbor search problem, as well as its approximate variants, have been a subject of extensive studies over the last few decades, see, e.g.,~\cite{bentley1975multidimensional, arya1998optimal, indyk1998approximate, kushilevitz2000efficient, krauthgamer2004navigating, andoni2014beyond} and the references therein.

Despite their success, however, the current algorithms suffer from significant theoretical and practical limitations. One of their major drawbacks is their inability to support  and exploit  {\em structure} in query sets that is often present in applications. Specifically, in many applications (notably in computer vision), queries issued to the data structure are not unrelated but instead correspond to samples taken from the same object.
For example, queries can correspond to pixels or small patches taken from the same image. To ensure consistency, one needs to impose ``compatibility constraints'' that ensure that related queries return similar answers.  Unfortunately, standard nearest neighbor data structures do not provide a clear way to enforce such constraints, as all queries are processed independently of each other. 

To address this issue, we introduce the {\em Simultaneous Nearest Neighbor Search} (SNN) problem. Given $k$ simultaneous query points $q_1,q_2,\cdots,q_k$, the goal of a SNN data structure is to find $k$ points (also called {\em labels}) $p_1,p_2,\cdots,p_k$ in $P$ such that (i) $p_i$ is close to $q_i$, and (ii) $p_1,\cdots,p_k$ are ``compatible''. 
Formally, the compatibility is defined by a graph $G=(Q,E)$ with $k$ vertices which is given to the data structure, along with the query points $Q=q_1,\cdots,q_k$. 
Furthermore,  we assume that the data set $P$ is a subset of some space $X$ equipped with a distance function $\dist_X$, and that we are given another metric $\dist_Y$ defined over $P \cup Q$.
Given the graph $G$ and the queries $q_1, \cdots ,q_k$, the goal of the SNN data structure is to return points $p_1, \cdots, p_k$  from $P$ that minimize the following function:
\begin{eqnarray}\label{eqn:costfunctiongeneral}
\sum_{i=1}^k \kappa_i \dist_Y(p_i,q_i) + \sum_{(i,j)\in E} \lambda_{i,j}\dist_X(p_i,p_j)
\end{eqnarray}
where $\kappa_i$ and $\lambda_{i,j}$ are parameters defined in advance. 

The above formulation captures a wide variety of applications that are not well modeled by traditional NN search. For example,  many applications in computer vision involve computing nearest neighbors of pixels or image patches from the same image~\cite{freeman2002example, boykov2001fast,barnes2009patchmatch}. In particular, algorithms for tasks such as de-noising (removing noise from an image),  restoration (replacing a deleted or occluded part of an image) or super-resolution (enhancing the resolution of an image) involve assigning ``labels'' to each image patch\footnote{This problem has been formalized in the algorithms literature as the {\em metric labeling} problem~\cite{kleinberg2002approximation}. 
The problem considered in this paper can thus be viewed as a variant of metric labeling with a very large number of labels.}. The labels could correspond to the pixel color, the enhanced image patch, etc. The label assignment should have the property that the labels are similar to the image patches they are assigned to, while at the same time the labels assigned to nearby image patches should be similar to each other.
The objective function in Equation~\ref{eqn:costfunctiongeneral} directly captures these constraints.

 From a theoretical perspective, Simultaneous Nearest Neighbor Search generalizes several well-studied computational problems, notably the Aggregate Nearest Neighbor problem~\cite{yiu2005aggregate, li2011flexible, li2011group, agarwal2012nearest,kopelowitz2012faster} and the $0$-extension problem~\cite{karzanov,FHRT,CKR,archer}. The first problem is quite similar to the basic nearest neighbor search problem over a metric $\dist$,  except that the data structure is given $k$ queries $q_1 \cdots q_k$, and the goal is to find a data point $p$ that minimizes the sum\footnote{Other aggregate functions, such as the maximum, are considered as well.} $\sum_i \dist(q_i,p)$. This objective can be easily simulated in SNN by setting $\dist_Y=\dist$ and 
 $\dist_X = L \cdot \mbox{uniform}$, where $L$ is a very large number and $\mbox{uniform}(p,q)$ is the uniform metric. The  0-extension problem is a combinatorial optimization problem where the goal is to minimize an objective function quite similar to that in Equation~\ref{eqn:costfunctiongeneral}. The exact definition of $0$-extension as well as its connections to SNN are discussed in detail in Section~\ref{ss:0ext}.

\subsection{Our results} In this paper we consider the basic case where $\dist_X=\dist_Y$ and $\lambda_{i,j}=\kappa_i=1$; we refer to this variant as the {\em unweighted} case. Our main contribution is a general reduction that enables us to design and analyze efficient data structures for unweighted SNN. The algorithm (called {\em Independent Nearest Neighbors} or {\em INN}) consists of two steps. In the first (pruning) step, for each query point $q_i$ we find its  nearest neighbor\footnote{Our analysis immediately extends to  the case where the we compute approximate, not exact, nearest neighbors. 
For simplicity we focus only on the exact case in the following discussion.}
 point $\hat{p}_i$ ; this can be done efficiently using existing nearest neighbor search data structures. In the second (optimization) step, we run an appropriate (approximation) algorithm for the SNN problem over sets $q_1,\cdots,q_k$ and $\hat{p}_1,\cdots,\hat{p}_k$; this can be done efficiently given that $k$ is much smaller than $n$. We show that the resulting algorithm satisfies a $O(b \log k/\log \log k)$-approximation guarantee, where $b$ is the approximation factor of the algorithm used in the second step.  
 This can be further improved to $O(b \delta )$, if the metric space $\dist$ admits a $\delta$-{\em padding decomposition}  (see Preliminaries for more detail). 
 The running time incurred by this algorithm is bounded by the cost of $k$  nearest neighbor search queries in a data set of size $n$ plus the  cost of the approximation algorithm for the $0$-extension problem over an input of size $k$. 
 By plugging in the best nearest neighbor algorithms for $\dist$ we obtain significant running time savings if $k \ll n$.

 We note that INN is somewhat similar to the belief propagation algorithm for super-resolution described in ~\cite{freeman2002example}. Specifically, that algorithm selects 16  closest labels for each $q_i$, and then chooses one of them by running a belief propagation algorithm that optimizes an objective function similar to Equation \ref{eqn:costfunctiongeneral}. However, we note that the algorithm in~\cite{freeman2002example} is heuristic and is not supported by approximation guarantees. 
 
 We complement our upper bound by showing that the aforementioned reduction inherently yields super-constant approximation guarantee. Specifically, we show that, for an appropriate distance function $\dist$, queries $q_1, \cdots ,q_k$, and  a label set $P$, the best solution to SNN with the label set restricted to $\hat{p}_1,\cdots,\hat{p}_k$ can be $\Theta(\sqrt{\log k})$ times larger than the best solution with label set equal to $P$. This means that even if the second step problem is solved to optimality, reducing the set of labels from $P$ to $\hat{P}$ inherently increases the cost by a super-constant factor.  

However, we further show that the aforementioned limitation can be overcome if the compatibility graph $G$ has pseudoarboricity $r$  (which means that each edge can be mapped to one of its endpoint vertices such that at most $r$ edges are mapped to each vertex). Specifically, we show that if $G$ has pseudoarboricity $r$, then the gap between the best solution using labels in $P$, and the best solution using labels in $\hat{P}$, is at most $O(r)$.
Since many graphs used in practice do in fact satisfy $r=O(1)$ (e.g., 2D grids, 3D grids or planar graphs), this means that the gap is indeed constant for a wide collection of common compatibility graphs.

In Appendix~\ref{s:2r1} we also present an alternative algorithm for the $r$-pseudoarboricity case. Similarly to INN, the algorithm computes the nearest label to each query $q_i$. However, the distance function used to compute the nearest neighbor involves not only the distance between $q_i$ and a label $p$, but also the distances between the {\em neighbors} of $q_i$ in $G$ and $p$. This nearest neighbor operation can be implemented using any data structure for the Aggregate Nearest Neighbor problem~\cite{yiu2005aggregate, li2011flexible, li2011group, agarwal2012nearest,kopelowitz2012faster}. Although this results in a more expensive query time, the labeling computed by this algorithm is final, i.e., there is no need for any additional postprocessing. Furthermore, the pruning gap (and therefore the final approximation ratio) of the algorithm is only $2r+1$, which is better than our bound for INN. 

Finally, we validate our theoretical results by preliminary experiments comparing our SNN data structure with an alternative (less efficient) algorithm that solves the same optimization problem using the full label set $P$. In our experiments we apply both algorithms to an image denoising task and measure their performance using the objective function~\eqref{eqn:costfunctiongeneral}. In particular, we show that the ``empirical gap'' incurred by the above approach, i.e,\ the ratio of objective function values observed in our experiments, is very close to 1.

\subsection{Our techniques}  We start by pointing out that SNN can be reduced to 0-extension in a ``black-box'' manner. Unfortunately,  this reduction yields an SNN algorithm whose running time depends on the size of labels $n$, which could be very large; essentially this approach defeats the goal of having a data structure solving the problem. The INN algorithm overcomes this issue by reducing the number of labels from $n$ to $k$. However the pruning step can increase the cost of the best solution. The ratio between the optimum cost after pruning to the optimum cost before pruning is called the {\em pruning gap}. 

To bound the pruning gap, we again resort to existing $0$-extension algorithms, albeit in a ``grey box'' manner. Specifically, we observe that many algorithms, such as those in \cite{CKR,archer, FHRT, leenaor}, proceed by first creating a label assignment in an ``extended'' metric space (using a LP relaxation of  $0$-extension),  and then apply a rounding algorithm to find an actual solution. The key observation is that the correctness of the rounding step does {\em not} rely on the fact that the initial label assignment is optimal, but instead it works for any label assignment. We use this fact to translate the known upper bounds for the integrality gap of linear programming relaxations of $0$-extension into upper bounds for the pruning gap. On the flip side, we show a lower bound for the pruning gap by mimicking the arguments used in~\cite{CKR} to lower bound the integrality gap  of a $0$-extension relaxation.

To overcome the lower bound, we consider the case where the compatibility graph $G$ has pseudoarboricity $r$. Many graphs used in applications, such as 2D grids, 3D grids or planar graphs, have pseudoarboricity $r$ for some constant $r$. We show that for such graphs the pruning gap is only $O(r)$. The proof proceeds by directly assigning labels in $\hat{P}$ to the nodes in $Q$ and bounding the resulting cost increase. 
It is worth noting that the ``grey box'' approach outlined in 
the preceding paragraph, combined with Theorem 11 of~\cite{CKR},
yields an $O(r^3)$ pruning gap for the class of $K_{r,r}$-minor-free
graphs, whose pseudoarboricity is $\tilde{O}(r)$. 
Our $O(r)$ pruning gap not only
improves this $O(r^3)$ bound in a quantitative
sense,
but it also applies to a much broader class of graphs. For example, 
three-dimensional grid graphs have pseudoarboricity 6, but the
class of three-dimensional grid graphs includes graphs with 
$K_{r,r}$ minors for every positive integer $r$. 

Finally, we validate our theoretical results by experiments. We focus on a simple de-noising scenario where $X$ is the pixel color space, i.e., the discrete three-dimensional space space $\{0 \ldots 255\}^3$. Each pixel in this space is parametrized by the intensity of the red, green and blue colors. We use the Euclidean norm to measure the distance between two pixels. We also let $P=X$. We consider three test images:  a cartoon with an MIT logo and two natural images. For each image we add some noise and then solve the SNN problems for both the full color space $P$ and the pruned color space $\hat{P}$. Note that since $P=X$, the set of pruned labels $\hat{P}$  simply contains all pixels present in the image. 

 Unfortunately, we cannot solve the problems optimally, since the best known exact algorithm takes exponential time. Instead, we run the same approximation algorithm on both instances and compare the solutions. We find that the values of the objective function for the solutions obtained using pruned labels and the full label space are equal up to a small multiplicative factor. This suggests that the empirical value of the pruning gap is very small, at least for the simple data sets that we considered.

\section{Definitions and Preliminaries}
We define the \emph{Unweighted Simultaneous Nearest Neighbor} problem as follows.
Let $(X,\dist)$ be a metric space and let $P\subseteq X$ be a set of $n$ points from the space.
 
\begin{definition}
In the \emph{Unweighted Simultaneous Nearest Neighbor} problem, the goal is to build a data structure over a given point set $P$ that supports the following operation. Given a set of $k$ points $Q=\{q_1, \cdots, q_k\}$ in the metric space $X$, along with a graph $G=(Q,E)$ of $k$ nodes, the goal is to report $k$ (not necessarily unique) points from the database $p_1, \cdots, p_k \in P$ which minimize the following cost function:
\begin{eqnarray}\label{eqn:costfunction}
\sum_{i=1}^k \dist(p_i,q_i) + \sum_{(q_i,q_j)\in E} \dist(p_i,p_j)
\end{eqnarray}
\noindent We refer to the first term in sum as the \emph{nearest neighbor (NN)} cost, and to the second sum as the \emph{pairwise (PW)} cost. We denote the cost of the optimal assignment from the point set $P$ by $\cost(Q, G, P)$.
\end{definition}
\noindent In the rest of this paper, simultaneous nearest neighbor (SNN)  refers to the unweighted version of the problem (unless stated otherwise).
Next, we define the \emph{pseudoarboricity} of a graph and \emph{$r$-sparse} graphs.

\begin{definition}
\emph{Pseudoarboricity} of a graph $G$ is defined to be the minimum number $r$, such that the edges of the graph can be oriented to form a directed graph with out-degree at most $r$. In this paper, we call such graphs as \emph{$r$-sparse}.
\end{definition}
\noindent Note that given an $r$-sparse graph, one can map the edges to one of its endpoint vertices such that there are at most $r$ edges mapped to each vertex. 
The doubling dimension of a metric space is defined as follows.
\begin{definition}\label{def:dd}
The \emph{doubling dimension} of a metric space $(X,\dist)$ is defined to be the smallest $\delta$ such that  every ball in $X$ can be covered by $2^\delta$ balls of half the radius. 
\end{definition}
\noindent It is known that the doubling dimension of any finite metric space is $O(\log \card{X})$. We then define padding decompositions.
\begin{definition}
A metric space $(X,\dist)$ is $\delta$-padded decomposable if for every $r$, there is a randomized partitioning of $X$ into clusters $\mathcal{C}=\{C_i\}$ such that, each $C_i$ has diameter at most $r$, and that for every $x_1,x_2\in X$, the probability that $x_1$ and $x_2$ are in different clusters is at most $\delta \dist(x_1,x_2)/r$.
\end{definition}
\noindent It is known that any finite metric with doubling dimension $\delta$ admits an $O(\delta)$-padding decomposition \cite{doublingdimension}.

\noindent 

\subsection{$0$-Extension Problem}
\label{ss:0ext}
The \emph{$0$-extension} problem, first defined by Karzanov \cite{karzanov} is closely related  to the Simultaneous Nearest Neighbor problem. In the $0$-extension problem, the input is a graph $G(V,E)$ with a weight function $w(e)$, and a set of terminals $T\subseteq V$ with a metric $d$ defined on $T$. The goal is to find a mapping from the vertices to the terminals $f: V\rightarrow T$ such that each terminal is mapped to itself and that the following cost function is minimized:
\[
\sum_{(u,v)\in E} w(u,v) \cdot d(f(u),f(v))
\]
It can be seen that this is a special case of the metric labeling problem \cite{kleinberg2002approximation} and thus a special case of the general version of the SNN problem defined by Equation \ref{eqn:costfunctiongeneral}. To see this, it is enough to let $Q=V$ and $P = T$, and let $\kappa_i = \infty$
for $q_i \in T$, $\kappa_i = 0$ for $q_i \not\in T$, 
and $\lambda_{i,j}=w(i,j)$ in Equation \ref{eqn:costfunctiongeneral}.

Calinescu et al. \cite{CKR} considered the semimetric relaxation of the LP for the $0$-extension problem and gave an $O(\log \card{T})$ algorithm using randomized rounding of the LP solution. They also proved an integrality ratio of $O(\sqrt{\log \card{T}})$ for the semimetric LP relaxation.

\noindent Later Fakcharoenphol et al. \cite{FHRT} improved the upper-bound to $O(\log \card{T} / \log \log \card{T})$, and Lee and Naor \cite{leenaor} proved that if the metric $d$ admits a $\delta$-padded decomposition, then there is an $O(\delta)$-approximation algorithm for the $0$-extension problem. For the finite metric spaces, this gives an $O(\delta)$ algorithm where $\delta$ is the doubling dimension of the metric space. Furthermore, the same results can be achieved using another metric relaxation (earth-mover relaxation), see \cite{archer}. Later Karloff et al. \cite{KKMR} proved that there is no polynomial time algorithm for $0$-extension problem with approximation factor $O((\log n)^{1/4-\eps})$ unless $NP\subseteq DTIME(n^{poly(\log n)})$.

SNN can be reduced to 0-extension in a ``black-box'' manner via the following lemma.
\begin{lemma}\label{lem:0reduction}
Any $b$-approximate algorithm for the $0$-extension problem yields an $O(b)$-approximate algorithm for the SNN problem.
\end{lemma}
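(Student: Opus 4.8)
The plan is to reduce the given SNN instance to a single $0$-extension instance whose \emph{terminal set is exactly $P$}, so that any feasible $0$-extension solution can be read back directly as a feasible SNN solution with no further ``label rounding''. Concretely: for each query $q_i$ let $\hat q_i\in P$ be a nearest neighbor of $q_i$ in $P$ (ties broken arbitrarily) and put $\rho_i=\dist(q_i,\hat q_i)$. Form the $0$-extension instance on vertex set $P\cup\{v_1,\dots,v_k\}$, where $v_i$ is a fresh vertex standing for query $q_i$; declare $T=P$ the terminals, equipped with the metric $\dist|_P$; and add a unit-weight edge $(v_i,\hat q_i)$ for every $i$ together with a unit-weight edge $(v_i,v_j)$ for every $(q_i,q_j)\in E$. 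Run the assumed $b$-approximation for $0$-extension on this instance to obtain $f$ (which fixes every terminal), and output the SNN labels $p_i:=f(v_i)\in P$.

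The analysis is two short computations. First, the optimum of the constructed $0$-extension instance is at most $2\,\cost(Q,G,P)$: taking optimal SNN labels $p_1^*,\dots,p_k^*$ and the extension $f^*(v_i)=p_i^*$, each edge $(v_i,\hat q_i)$ contributes $\dist(p_i^*,\hat q_i)\le \dist(p_i^*,q_i)+\rho_i\le 2\,\dist(p_i^*,q_i)$ --- here $\rho_i\le \dist(p_i^*,q_i)$ because $\hat q_i$ is a nearest neighbor of $q_i$ in $P$ and $p_i^*\in P$ --- while the edges $(v_i,v_j)$ contribute exactly $\sum_{(q_i,q_j)\in E}\dist(p_i^*,p_j^*)$. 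Second, reading $f$ back: the pairwise term $\sum_{(q_i,q_j)\in E}\dist(p_i,p_j)$ \emph{equals} the pairwise part of the $0$-extension objective of $f$, and for the nearest-neighbor term $\dist(p_i,q_i)=\dist(f(v_i),q_i)\le \dist(f(v_i),\hat q_i)+\rho_i$, so summing over $i$ it is at most the NN part of the $0$-extension objective of $f$ plus $\sum_i\rho_i$. Since $f$ is a $b$-approximation its objective is at most $2b\,\cost(Q,G,P)$, and since $\sum_i\rho_i\le\sum_i\dist(p_i^*,q_i)\le\cost(Q,G,P)$, the returned labels have SNN cost at most $2b\,\cost(Q,G,P)+\cost(Q,G,P)=(2b+1)\,\cost(Q,G,P)=O(b)\,\cost(Q,G,P)$.

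The only point that needs care --- and the reason the construction routes $v_i$ to the precomputed $\hat q_i$ rather than to a fresh auxiliary terminal placed at the point $q_i$ --- is keeping every terminal inside $P$. With an auxiliary terminal $\tilde q_i$ located at $q_i$ and an edge $(v_i,\tilde q_i)$, a $0$-extension solver could legally map some $v_i$ to a terminal $\tilde q_\ell\notin P$, which is not a valid SNN label; repairing such a mapping appears to cost a factor proportional to the degree of $q_i$ in $G$, which would break the $O(b)$ bound. Substituting $\hat q_i$ eliminates this issue at the cost of only the additive $\sum_i\rho_i\le\cost(Q,G,P)$ slack accounted for above. (As a side remark, the same argument tolerates $c$-approximate nearest neighbors $\hat q_i$, losing only an extra $O(c)$ factor, which is how one justifies that the pruning step may use approximate NN.)
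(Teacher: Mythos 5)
Your proposal is correct and is essentially the paper's own reduction: both build a $0$-extension instance with terminal set $P$, one non-terminal per query connected by a unit edge to that query's nearest neighbor in $P$ and to the other query vertices according to $E$, and both obtain the same $(2b+1)$ bound by (i) exhibiting the feasible extension $v_i\mapsto p_i^*$ to bound the $0$-extension optimum by $2\,\cost(Q,G,P)$ and (ii) paying an extra additive $\sum_i\dist(q_i,\hat q_i)\le\cost(Q,G,P)$ when translating back to SNN. The only difference is presentational (you bound the $0$-extension optimum first and then apply the approximation guarantee, while the paper interleaves the two in one chain of inequalities), so no further comparison is needed.
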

\begin{proof}
Given an instance of the SNN problem $(Q,G',P)$, we build an instance of the $0$-extension problem $(V,T,G)$ as follows. Let $T=P$ and $V = T\cup Q$. The metric $d$ is the same as $\dist$. However the graph $G$ of the $0$-extension problem requires some modification. Let $G'=(Q,E_{G'})$, then $G=(V,E)$ is defined as follows. For each $q_i,q_j\in Q$, we have the edge $(q_i,q_j)\in E$ iff $(q_i,q_j)\in E_{G'}$.  We also include another type of edges in the graph: for each $q_i\in Q$, we add an edge $(q_i,\hat p_i)\in E$ where $\hat p_i \in P$ is the  nearest neighbor of $q_i$. Note that we consider the graph $G$ to be unweighted.

Using the $b$-approximation algorithm for this problem, we get an assignment $\mu$ that maps the non-terminal vertices $q_1, \cdots, q_k$  to the terminal vertices. Suppose $q_i$ is mapped to the terminal vertex $p_i$ in this assignment. Let $p_1^*,\cdots, p_k^*$ be the optimal SNN assignment.
 Next, we show that  the same mapping $\mu$ for the SNN problem, gives us an $O(b)$ approximate solution. The SNN cost of the mapping $\mu$ is denoted as follows:
\[
\begin{split}
\cost^{\mathrm{SNN}}(\mu) &= \sum_{i=1}^k \dist(q_i , p_i) + \sum_{(q_i,q_j)\in E_{G'}} \dist(p_i , p_j)\\
&\leq \sum_{i=1}^k \dist(q_i , \hat p_i) + \sum_{i=1}^k \dist(\hat p_i , p_i) + \sum_{(q_i,q_j)\in E_{G'}} \dist(p_i , p_j) \\
&\leq \sum_{i=1}^k \dist(q_i , p_i^*) + b \cdot [\sum_{i=1}^k \dist(\hat p_i , p_i^*) + \sum_{(q_i,q_j)\in E_{G'}} \dist(p_i^* , p_j^*)] \\
&\leq \cost(Q,G',P) + b\cdot [\sum_{i=1}^k \dist(\hat p_i , q_i) + \sum_{i=1}^k \dist(q_i , p_i^*) +\sum_{(q_i,q_j)\in E_{G'}} \dist(p_i^* , p_j^*)]\\
&\leq \cost(Q,G',P) + b\cdot [\sum_{i=1}^k \dist(\hat p_i , q_i) + \cost(Q,G',P)]\\
&\leq \cost(Q,G',P) (2b +1)
\end{split}
\]
\noindent where we have used triangle inequality and the following facts in the above. First,  $\hat p_i$ is the closest point in $P$ to $q_i$ and thus $\dist(q_i,\hat p_i)\leq \dist(q_i,p_i^*)$. Second,  by definition we have that $\cost(Q,G',P) = \sum_{i=1}^k \dist(q_i , p_i^*) + \sum_{(q_i,q_j)\in E_{G'}} \dist(p_i^* , p_j^*)$. Finally, since $\mu$ is a $b$ approximate solution for the 0-extension problem, we have that $\sum_{i=1}^k \dist(\hat p_i , p_i) + \sum_{(q_i,q_j)\in E_{G'}} \dist(p_i , p_j)$ is smaller than $b$ times the 0-extension cost of any other assignment, and in particular $\sum_{i=1}^k \dist(\hat p_i , p_i^*) + \sum_{(q_i,q_j)\in E_{G'}} \dist(p_i^* , p_j^*)$.
\end{proof}
By plugging in the known $0$-extension algorithms cited earlier we obtain the following:
\begin{corollary}
There exists an $O(\log n/\log \log n)$ approximation algorithm for the SNN problem with running time $n^{O(1)}$, where $n$ is the size of the label set.
\end{corollary}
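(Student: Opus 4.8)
The plan is to combine Lemma~\ref{lem:0reduction} with the best known polynomial-time approximation algorithm for $0$-extension. Recall from the discussion preceding the corollary that Fakcharoenphol, Harrelson, Rao, and Talwar~\cite{FHRT} give a polynomial-time algorithm for $0$-extension with approximation factor $b = O(\log|T|/\log\log|T|)$, where $T$ is the terminal set. Lemma~\ref{lem:0reduction} turns any such $b$-approximate $0$-extension algorithm into an $O(b)$-approximate SNN algorithm, so it only remains to instantiate the reduction, check that the resulting $0$-extension instance has terminal set of size exactly $n$, and confirm that every step runs in time $n^{O(1)}$.

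First I would instantiate the reduction used in the proof of Lemma~\ref{lem:0reduction}: given an SNN instance $(Q,G',P)$ with $|P|=n$, it outputs a $0$-extension instance with terminal set $T=P$, so $|T|=n$ and hence $b=O(\log n/\log\log n)$, on vertex set $V=P\cup Q$ of size $n+k$, with metric $d=\dist$ and an unweighted graph obtained from $G'$ by adding, for each query $q_i$, the edge $(q_i,\hat p_i)$ to its nearest neighbor in $P$. Building this instance requires computing the $\hat p_i$'s (e.g.\ by brute force, $O(nk)$ distance evaluations) and copying $O(|E_{G'}|+k)$ edges, all polynomial.

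Next I would run the FHRT algorithm on this instance; it runs in time polynomial in $|V|=n+k$, i.e.\ $n^{O(1)}$ in the regime of interest, and returns a mapping $\mu$ that is a $b$-approximate $0$-extension solution. By the inequality chain established in the proof of Lemma~\ref{lem:0reduction}, reading $\mu$ as an SNN assignment yields SNN cost at most $(2b+1)\cost(Q,G',P)=O(\log n/\log\log n)\cdot\cost(Q,G',P)$, which is exactly the claimed guarantee.

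The only point requiring care — and the sole ``obstacle,'' such as it is — is bookkeeping the running time: both the nearest-neighbor precomputation and the $0$-extension solve are polynomial in $n+k$ rather than in $n$ alone, so the clean statement ``$n^{O(1)}$'' implicitly assumes $k=n^{O(1)}$ (the only interesting case, since otherwise one could simply restrict attention to the relevant part of $P$). Modulo this observation, the corollary is an immediate consequence of Lemma~\ref{lem:0reduction} together with the cited $O(\log|T|/\log\log|T|)$ bound for $0$-extension, and requires no new ideas.
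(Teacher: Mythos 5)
Your proof is correct and matches the paper's intended argument exactly: the corollary is obtained by plugging the $O(\log|T|/\log\log|T|)$-approximation of Fakcharoenphol et al.\ into the reduction of Lemma~\ref{lem:0reduction} with $|T|=|P|=n$. The running-time bookkeeping you add (polynomial in $n+k$, with $k\le n^{O(1)}$ in the regime of interest) is a reasonable clarification but not a departure from the paper.
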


\begin{corollary}
If the metric space $(X,\dist)$ is $\delta$-padded decomposable, then there exists an $O(\delta)$ approximation algorithm for the SNN problem  with running time $n^{O(1)}$. For finite metric spaces $X$, $\delta$ could represent the doubling dimension of the  metric space (or equivalently the doubling dimension of $P\cup Q$).
\end{corollary}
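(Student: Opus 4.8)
The plan is to obtain this corollary by combining the black-box reduction of Lemma~\ref{lem:0reduction} with the Lee--Naor bound for $0$-extension on padded-decomposable metrics. Lee and Naor \cite{leenaor} give an $O(\delta)$-approximation algorithm for $0$-extension whenever the terminal metric admits a $\delta$-padded decomposition, so it suffices to check that the $0$-extension instance produced by the reduction has this property and then set $b = O(\delta)$ in Lemma~\ref{lem:0reduction}.

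First I would observe that in the reduction constructed in the proof of Lemma~\ref{lem:0reduction}, the $0$-extension instance $(V,T,G)$ has $T = P$ and uses the metric $d$ equal to $\dist$ restricted to $P$; note that the $0$-extension objective only evaluates $d$ on pairs of terminals, since $f$ maps every vertex of $V$ into $T$. Hence the single fact needed is that $(P,\dist)$ is $\delta$-padded decomposable whenever $(X,\dist)$ is. This is immediate from the definition: given the randomized partition $\mathcal{C} = \{C_i\}$ of $X$ at scale $r$, the restriction $\{C_i \cap P\}$ is a randomized partition of $P$ into clusters of diameter at most $r$, and for any $x_1,x_2 \in P$ the event that they land in different clusters is the same as in $X$, so its probability is at most $\delta\,\dist(x_1,x_2)/r$. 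Plugging $b = O(\delta)$ into Lemma~\ref{lem:0reduction} then yields an $O(\delta)$-approximation for SNN. For the running time, the Lee--Naor algorithm solves an LP relaxation of $0$-extension and rounds it in time polynomial in the instance size $|V| = n + k$; together with the $k$ nearest-neighbor computations and the offline optimization over the pruned instance, the total is $n^{O(1)}$ (assuming, as throughout, that $k$ is polynomially bounded in $n$).

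For the finite-metric statement, it is known \cite{doublingdimension} that any finite metric of doubling dimension $\delta$ is $O(\delta)$-padded decomposable, so the first part applies with this value of $\delta$. Since the reduction only ever refers to the metric on $P \cup Q$, one may equally take $\delta$ to be the doubling dimension of $P \cup Q$, which differs from the doubling dimension of the ambient space $X$ by at most a constant factor.

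The argument is essentially a plug-in of earlier results, so I do not expect a substantive obstacle; the one point that needs care — and the one I would make explicit — is that the $0$-extension metric in the reduction is defined only on the label set $P$, so one must invoke the (routine) fact that submetrics inherit padding decompositions rather than appealing to the decomposability of $X$ directly.
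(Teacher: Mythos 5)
Your proposal is correct and matches the paper's (implicit) argument exactly: the corollary is obtained by plugging the Lee--Naor $O(\delta)$-approximation for $0$-extension on $\delta$-padded decomposable metrics into Lemma~\ref{lem:0reduction} as the $b$-approximate algorithm. Your extra observation that the terminal metric is a submetric of $(X,\dist)$ and that padding decompositions restrict to submetrics is a reasonable point of care that the paper leaves unstated.
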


Unfortunately, this reduction yields a SNN algorithm with running time depending on the size of labels $n$, which could be very large.  
In the next section we show how to improve the running time by reducing the labels set size from $n$ to $k$.
However, unlike the reduction in this section, our new reduction will no longer be ``black-box''.
Instead, its analysis will use {\em particular properties} of the 0-extension algorithms. 
Fortunately those properties are satisfied by the known approximation algorithms for this problem. 


\section{Independent Nearest Neighbors Algorithm}
In this section, we consider a natural and general algorithm for the SNN problem, which we call  {\em Independent Nearest Neighbors (INN)}.
The algorithm  proceeds as follows. Given the query points $Q=\{q_1,\cdots,q_k\}$, for each $q_i$ the algorithm picks its (approximate) nearest neighbor $\hat{p_i}$. Then it solves the problem over the set $\hat{P}  = \{\hat{p_1},\cdots,\hat{p_k}\}$ instead of $P$. This simple approach reduces the size of search space from $n$ down to $k$.

The details of the algorithm are shown in Algorithm \ref{alg:inn}.

\begin{algorithm}[!h]
\caption{Independent Nearest Neighbors (INN) Algorithm}
\label{alg:inn}
\vspace{0.2cm}
\textbf{Input} $Q=\{q_1,\cdots,q_k\}$, and input graph $G=(Q,E)$\\
\vspace{-0.5cm}
\begin{algorithmic}[1]

\vspace{0.2cm}
\FOR{$i=1$ \TO $k$}
	\STATE{Query the NN data structure to extract a nearest neighbor (or approximate nearest neighbor) $\hat{p_i}$ for $q_i$}
\ENDFOR
\STATE{Find the optimal (or approximately optimal) solution among the set $\hat{P} = \{\hat{p_1},\cdots,\hat{p_k}\}$.}
\end{algorithmic}
\end{algorithm}

In the rest of the section we analyze the quality of this pruning step. More specifically, we define the \emph{pruning gap} of the algorithm as the ratio of the optimal cost function using the points in $\hat P$ over its value using the original point set $P$.
\begin{definition}
The \emph{pruning gap} of an instance of SNN is defined as $\alpha(Q,G,P) = {\cost(Q,G,\hat{P}) \over \cost(Q,G,P)}$. We define the pruning gap of the INN algorithm, $\alpha$, as the largest value of $\alpha(Q,G,P)$ over all instances.
\end{definition}

First, in Section \ref{sec:general}, by proving a reduction from algorithms for rounding the LP solution of the $0$-extension problem, we show that for arbitrary graphs $G$, we have $\alpha = O(\log k/\log \log k)$, and if the metric $(X,\dist)$ is $\delta$-padded decomposable, we have $\alpha = O(\delta)$ (for example, for finite metric spaces $X$, $\delta$ can represent the doubling dimension of the metric space). Then, in Section \ref{sec:sparse}, we prove that $\alpha=O(r)$ where $r$ is the pseudoarboricity of the graph $G$. This would show that for the sparse graphs, the pruning gap remains constant. Finally, in Section \ref{sec:lowerbound}, we present a lower bound showing that the pruning gap could be as large as $\Omega(\sqrt{\log k})$ and as large as $\Omega(r)$ for ($r\leq \sqrt{\log k}$).  Therefore, we get the following theorem.

\begin{theorem}
The following bounds hold for the pruning gap of the INN algorithm. First we have $\alpha = O({\log k\over\log \log k})$, and that if metric $(X,\dist)$ is $\delta$-padded decomposable, we have $\alpha =O(\delta)$. Second, $\alpha = O(r)$ where $r$ is the pseudoarboricity of the graph $G$. Finally, we have that $\alpha = \Omega(\sqrt {\log k})$ and $\alpha = \Omega(r)$ for $r\leq \sqrt{\log k}$.
\end{theorem}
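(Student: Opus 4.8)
The statement bundles three essentially independent claims — two upper bounds on $\alpha$, an $O(r)$ refinement, and a matching lower bound — so I would prove each separately, leaning in every case on known machinery for $0$-extension.

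\emph{The $O(\log k/\log\log k)$ and $O(\delta)$ bounds.} The plan is a ``grey-box'' reduction to the rounding step of existing $0$-extension algorithms. Fix an instance $(Q,G,P)$ with optimal labels $p_1^*,\dots,p_k^*$, and split $\cost(Q,G,P)$ into $\mathrm{NN}^*=\sum_i\dist(q_i,p_i^*)$ and $\mathrm{PW}^*=\sum_{(i,j)\in E}\dist(p_i^*,p_j^*)$. Build a $0$-extension instance on vertex set $Q\cup\hat P$ with terminal set $\hat P$ (only $k$ terminals), metric $\dist$, and edge set equal to the edges of $G$ together with one edge $(q_i,\hat p_i)$ per query — exactly as in Lemma~\ref{lem:0reduction}, but with the \emph{pruned} labels as terminals. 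The point I want to exploit is that the algorithms of~\cite{FHRT} (factor $O(\log k/\log\log k)$) and~\cite{leenaor} (factor $O(\delta)$ when $\dist$ is $\delta$-padded decomposable) first produce a fractional assignment and then round it, and their rounding guarantees hold for \emph{any} feasible fractional assignment, not just the optimal one. I would therefore feed the rounding the assignment ``$q_i\mapsto p_i^*$'', a legitimate point of the metric $X$, which extends $\dist$ on $\hat P$; its cost is $\mathrm{PW}^*+\sum_i\dist(p_i^*,\hat p_i)\le\mathrm{PW}^*+2\,\mathrm{NN}^*\le 2\cost(Q,G,P)$, using $\dist(q_i,\hat p_i)\le\dist(q_i,p_i^*)$ and the triangle inequality. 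Rounding returns an integral $f\colon Q\cup\hat P\to\hat P$ with $0$-extension cost $O(\rho)\cdot\cost(Q,G,P)$, and reading $f|_Q$ as an SNN labeling over $\hat P$ — bounding $\dist(q_i,f(q_i))\le\dist(q_i,\hat p_i)+\dist(\hat p_i,f(q_i))$ — keeps the SNN cost at $O(\rho)\cdot\cost(Q,G,P)$, so $\alpha=O(\rho)$.

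\emph{The $O(r)$ bound.} For $r$-sparse $G$ I would avoid LP rounding and exhibit a labeling over $\hat P$ directly. Orient $G$ with out-degree at most $r$, and let $\phi(q_i)$ be the point of $\hat P$ nearest to $p_i^*$; then $\dist(p_i^*,\phi(q_i))\le\dist(p_i^*,\hat p_i)\le 2\dist(q_i,p_i^*)$, which already gives $\dist(q_i,\phi(q_i))=O(1)\cdot\dist(q_i,p_i^*)$, i.e.\ NN cost $O(1)\cdot\mathrm{NN}^*$. For an edge oriented $i\to j$, bound $\dist(\phi(q_i),\phi(q_j))\le\dist(\phi(q_i),p_i^*)+\dist(p_i^*,p_j^*)+\dist(p_j^*,\phi(q_j))$: the first summand is $O(1)\cdot\dist(q_i,p_i^*)$, and the third — because $\phi(q_j)$ is the \emph{nearest} point of $\hat P$ to $p_j^*$ while $\hat p_i\in\hat P$ — is at most $\dist(p_j^*,\hat p_i)\le\dist(p_i^*,p_j^*)+2\dist(q_i,p_i^*)$, which is again expressed in quantities attached to the \emph{tail} $i$. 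Since each vertex is a tail at most $r$ times, summing over all edges yields PW cost $O(r)\cdot\mathrm{NN}^*+O(1)\cdot\mathrm{PW}^*$, hence $\cost(Q,G,\hat P)=O(r)\cdot\cost(Q,G,P)$.

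\emph{The lower bound, and where the difficulty lies.} To get $\alpha=\Omega(\sqrt{\log k})$, and $\alpha=\Omega(r)$ for $r\le\sqrt{\log k}$, I would mirror the integrality-gap construction of~\cite{CKR}: take $G$ to be an expander on $k$ vertices of degree $\Theta(r)$ (degree $\Theta(\sqrt{\log k})$ to maximize the bound), embed the queries in a suitable metric $\dist$, and choose $P$ to contain well-spread ``consensus'' labels so that $\cost(Q,G,P)$ is small — playing the role of the cheap fractional/split solution in~\cite{CKR} — while arranging that each $\hat p_i$ is pinned near $q_i$; the expansion of $G$ then forces any labeling restricted to $\hat P$ to pay $\Omega(\sqrt{\log k})$ (resp.\ $\Omega(r)$) times more, exactly as expansion forces the $0$-extension integrality gap. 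I expect this direction to be the main obstacle: the upper bounds plug almost mechanically into existing rounding theorems, whereas the lower bound requires re-deriving the CKR estimates in the SNN language — in particular realizing the construction with an honest finite label set $P$ and checking that passing from $P$ to $\hat P$ reproduces the effect of integrality in their analysis.
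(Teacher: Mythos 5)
Your two LP-based upper bounds are correct and follow the paper's own ``grey-box'' strategy essentially verbatim: you isolate exactly the property the paper axiomatizes as a $\beta$-natural rounding algorithm (the rounding works for an arbitrary metric assignment, not just the LP optimum), feed it the assignment $q_i\mapsto p_i^*$ whose cost is at most $2\cost(Q,G,P)$ via $\dist(\hat p_i,p_i^*)\le 2\dist(q_i,p_i^*)$, and recover an SNN labeling over $\hat P$ of cost $(2\beta+1)\cost(Q,G,P)$. The paper phrases the auxiliary $0$-extension instance on the vertex set $\hat P\cup P^*$ with the identity map, while you phrase it on $Q\cup\hat P$ with the map $q_i\mapsto p_i^*$; these are the same instance in disguise and yield the same constant. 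Your $O(r)$ argument, by contrast, is a genuinely different and somewhat cleaner route than the paper's: the paper assigns to $q_i$ the label $\hat p_{j_m}$ of a ``designated neighbor'' chosen by minimizing $\dist(p_i^*,p_{j_\ell}^*)+\dist(p_{j_\ell}^*,q_{j_\ell})$ over $q_i$ and its $G$-neighbors, whereas you simply take $\phi(q_i)=\argmin_{p\in\hat P}\dist(p_i^*,p)$ and use $\hat p_i$ (for the tail $i$ of each oriented edge) as a witness to bound $\dist(p_j^*,\phi(q_j))\le x_e+2y_i$. Your accounting ($Y'\le 3Y$, $X'\le 2X+4rY$) is correct and charges everything to the tail, exactly as needed for the pseudoarboricity bound; it even slightly improves the paper's constant on the $X$ term.

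The genuine gap is the lower bound, which you only sketch and candidly flag as the hard part. The missing concrete construction is the following (mirroring the CKR integrality-gap instance): take a constant-degree expander $H$ on $k$ vertices $v_1,\dots,v_k$, attach to each $v_i$ a pendant leaf $u_i$ by an edge of length $\sqrt{\log k}$, let $\dist$ be the resulting shortest-path metric, set $P$ to be \emph{all} vertices (so that $\hat p_i=u_i=q_i$, pinning the pruned labels at the queries), set $Q=\{u_1,\dots,u_k\}$, and — crucially, since the theorem concerns the unweighted objective — take $G$ to be the \emph{multigraph} with $\sqrt{\log k}$ parallel copies of each expander edge. The calibration matters: the solution $p_i^*=v_i$ costs $O(k\sqrt{\log k})$, while any labeling inside $\hat P$ costs $\Omega(k\log k)$ by a three-case analysis (all labels distinct: every one of the $\Theta(k\sqrt{\log k})$ multi-edges pays at least $2\sqrt{\log k}$; all labels equal: $\Theta(k)$ queries sit at graph distance $\Theta(\log k)$ from the common label because the expander has logarithmic diameter; mixed: either some part has size $\ge k/2$, reducing to the second case, or vertex expansion forces $\Omega(k\sqrt{\log k})$ cut multi-edges, reducing to the first). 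Replacing the multiplicity $\sqrt{\log k}$ by $r$ gives $\alpha=\Omega(r)$ for $r\le\sqrt{\log k}$. Your sketch names the expander and the pinning of $\hat p_i$ near $q_i$, but omits the pendant-edge length, the edge-multiplicity device needed to realize the weighting within the unweighted model, and the case analysis that actually certifies the $\Omega(k\log k)$ cost; without these the $\Omega(\sqrt{\log k})$ and $\Omega(r)$ claims are not yet proved.
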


Note that the above theorem results in an $O(b\cdot \alpha)$ time algorithm for the SNN problem where $b$ is the approximation factor of the algorithm used to solve the metric labeling problem for the set $\hat P$, as noted in line 4 of the INN algorithm. For example in a general graph $b$ would be $O(\log k/\log \log k)$ that is added on top of $O(\alpha)$ approximation of the pruning step.


\subsection{Bounding the pruning gap using $0$-extension}\label{sec:general}
In this section we show upper bounds for the pruning gap ($\alpha$) of the INN algorithm. The proofs use specific properties of existing algorithms for the $0$-extension problem.
\begin{definition}\label{def:rounding-alg}
We say an algorithm $A$ for the $0$-extension problem is a \emph{$\beta$-natural rounding} algorithm if, given a graph $G=(V,E)$, a set of terminals $T\subseteq V$, a metric space $(X,d_X)$, and a mapping $\mu: V\rightarrow X$, it outputs another mapping $\nu:V\rightarrow X$ with the following properties:
\begin{itemize}
\item $\forall t\in T: \nu(t)=\mu(t)$
\item  $\forall v\in V: \exists t\in T \mbox{\ \ s.t.\ \ }  \nu(v) = \mu(t)$
\item $\cost(\nu)\leq \beta \cost(\mu)$, i.e., $\sum_{(u,v)\in E} d_X(\nu(u),\nu(v)) \leq  \beta\cdot \sum_{(u,v)\in E} d_X(\mu(u),\mu(v))$
\end{itemize}
\end{definition}

\noindent Many previous algorithms for the $0$-extension problem, such as \cite{CKR,archer, FHRT, leenaor}, first create the mapping $\mu$ using some LP relaxation of  $0$-extension (such as semimetric relaxation or earth-mover relaxation), and then apply a $\beta$-natural rounding algorithm for the $0$-extension to find the mapping  $\nu$ which yields the solution to the $0$-extension problem.
Below we give a formal connection between guarantees of these rounding algorithms, and the quality of the output of the INN algorithm (the pruning gap of INN). 

\begin{lemma}
Let $A$ be a $\beta$-natural rounding algorithm for the $0$-extension problem. Then we can infer that the pruning gap of the INN algorithm is $O(\beta)$, that is, $\alpha = O(\beta)$. 
\end{lemma}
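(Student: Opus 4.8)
The plan is to exhibit, given an arbitrary SNN instance $(Q,G',P)$, a mapping $\mu$ on an enlarged vertex set whose cost is at most a constant times $\cost(Q,G',P)$, apply the $\beta$-natural rounding algorithm $A$ to it, and read off from the rounded mapping $\nu$ a labeling that uses only the pruned set $\hat P$ and whose cost is $O(\beta)\cdot\cost(Q,G',P)$. Concretely, I would build the $0$-extension instance exactly as in the proof of Lemma~\ref{lem:0reduction}: set $V=Q\cup\hat P$, let the terminal set be $T=\hat P$ (not all of $P$!), keep the metric $d=\dist$ restricted to $V$, and let the graph $G$ on $V$ consist of the compatibility edges $(q_i,q_j)$ for $(q_i,q_j)\in E_{G'}$ together with one ``attachment'' edge $(q_i,\hat p_i)$ for each $i$. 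The crucial point — and the reason the argument is no longer black-box — is that we do not run a $0$-extension \emph{approximation} algorithm; instead we directly supply a specific fractional/extended mapping $\mu$ and invoke only the rounding guarantee.

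The mapping $\mu$ I would feed to $A$ is the one induced by an \emph{optimal} SNN solution over the full label set: let $p_1^*,\dots,p_k^*\in P$ be optimal for $\cost(Q,G',P)$, and set $\mu(q_i)=p_i^*\in X$, while $\mu(\hat p_i)=\hat p_i$ as required for terminals. Note $\mu$ maps into $X$, not necessarily into $T=\hat P$, which is exactly what the ``extended metric space'' flexibility in Definition~\ref{def:rounding-alg} permits. Its cost is $\cost(\mu)=\sum_i \dist(\hat p_i,p_i^*)+\sum_{(q_i,q_j)\in E_{G'}}\dist(p_i^*,p_j^*)$. The first sum is bounded using $\dist(\hat p_i,p_i^*)\le\dist(\hat p_i,q_i)+\dist(q_i,p_i^*)\le 2\dist(q_i,p_i^*)$ since $\hat p_i$ is the nearest neighbor of $q_i$; hence $\cost(\mu)\le 2\sum_i\dist(q_i,p_i^*)+\sum_{(q_i,q_j)\in E_{G'}}\dist(p_i^*,p_j^*)\le 2\,\cost(Q,G',P)$.

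Now apply $A$ to get $\nu$ with $\nu(\hat p_i)=\hat p_i$, each $\nu(q_i)$ equal to $\mu(t)$ for some terminal $t\in\hat P$ — i.e.\ $\nu(q_i)\in\hat P$ — and $\cost(\nu)\le\beta\,\cost(\mu)\le 2\beta\,\cost(Q,G',P)$. Define the SNN labeling by $\ell_i:=\nu(q_i)\in\hat P$. Its SNN cost is $\sum_i\dist(q_i,\ell_i)+\sum_{(q_i,q_j)\in E_{G'}}\dist(\ell_i,\ell_j)$. The pairwise term is a subsum of $\cost(\nu)$, so it is at most $2\beta\,\cost(Q,G',P)$. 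For the NN term I would bound $\dist(q_i,\ell_i)\le\dist(q_i,\hat p_i)+\dist(\hat p_i,\ell_i)$; the first part sums to $\sum_i\dist(q_i,\hat p_i)\le\sum_i\dist(q_i,p_i^*)\le\cost(Q,G',P)$, and the second part $\dist(\hat p_i,\ell_i)=\dist(\nu(\hat p_i),\nu(q_i))$ is exactly the length of the attachment edge $(q_i,\hat p_i)$ under $\nu$, hence another subsum of $\cost(\nu)\le 2\beta\,\cost(Q,G',P)$. Adding these, $\cost(Q,G',\hat P)\le O(\beta)\,\cost(Q,G',P)$, which is the claim $\alpha=O(\beta)$.

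The main obstacle, and the place to be careful, is making sure the attachment edges $(q_i,\hat p_i)$ are genuinely present in $G$ so that $\dist(\hat p_i,\nu(q_i))$ can legitimately be charged to $\cost(\nu)$ — without them the rounding gives no control on how far $\nu(q_i)$ strays from $\hat p_i$, and the NN cost could blow up. A secondary subtlety is that $T=\hat P$ rather than $P$, so the guarantee $\exists t\in T:\nu(v)=\mu(t)$ forces the rounded labels into $\hat P$ exactly as needed; one should double-check that the known algorithms of \cite{CKR,FHRT,archer,leenaor} are indeed $\beta$-natural in this sense, i.e.\ that their rounding step depends only on $\mu$ and the graph and not on any optimality of $\mu$ — which is asserted in the surrounding text. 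Everything else is triangle inequality and regrouping of sums, so I would keep those computations terse.
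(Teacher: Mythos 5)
Your proposal is correct and follows essentially the same route as the paper: it builds the same $0$-extension instance (your vertex set $Q\cup\hat P$ with $\mu(q_i)=p_i^*$ is just a relabeling of the paper's $P^*\cup\hat P$ with $\mu$ the identity), proves the same bound $\cost(\mu)\le 2\,\cost(Q,G',P)$, and charges the NN and PW costs of the rounded labeling to $\sum_i\dist(q_i,\hat p_i)$ plus subsums of $\cost(\nu)$ exactly as the paper does. The only cosmetic difference is that you bound the two subsums of $\cost(\nu)$ separately (yielding roughly $4\beta+1$ instead of the paper's $2\beta+1$), which does not affect the $O(\beta)$ conclusion.
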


\begin{proof}
Fix any SNN instance $(Q,G_S,P)$, where $G_S=(Q,E_{PW})$, and its corresponding INN invocation. 

We construct the inputs to the algorithm $A$ from the INN instance as follows. Let the metric space of $A$ be the same as $(X,\dist)$ defined in the SNN instance. Also, let $V$ be a set of $2k$ vertices corresponding to $\hat{P} \cup P^*$ with $T$ corresponding to $\hat{P}$. Here $P^*=\{p_1^*,\cdots,p_k^*\}$ is the set of the optimal solutions of SNN, and $\hat P$ is the set of nearest neighbors as defined by INN. The mapping $\mu$ simply maps each vertex from $V=\hat{P} \cup P^*$ to itself in the metric $X$ defined in SNN.
Moreover, the graph $G=(V,E)$ is defined such that $E=\{(\hat{p}_i, p_i^*)| 1\leq i\leq k\} \cup 
\{(p_i^*, p_j^*)| (q_i, q_j)\in E_{PW}\}$.

First we claim the following (note that $\cost(\mu)$ is defined in Definition \ref{def:rounding-alg}, and that by definition $\cost(Q,G_S,P) =\cost(Q,G_S,P^*) $)
\[
\cost(\mu)\leq 2\cost(Q,G_S,P^*) = 2\cost(Q,G_S,P)
\]
We know that $\cost(Q,G_S,P^*)$ can be split into NN cost and PW cost. We can also split $\cost(\mu)$ into NN cost (corresponding to edge set $\{(\hat{p}_i, p_i^*)| 1\leq i\leq k\}$) and PW cost (corresponding to edge set $\{(p_i^*, p_j^*)| (q_i, q_j)\in E_{PW}\}$).
By definition we know the PW costs of $\cost (Q,G_S,P)$
and $\cost(\mu)$ are equal. For NN cost, by triangle inequality, we know 
$\dist(\hat p_i, p_i^*)\leq \dist(\hat p_i, q_i)+\dist(q_i, p_i^*)\leq 2\cdot \dist(q_i, p_i^*) $. Here we use the fact that $\hat p_i$ is the nearest database point of $q_i$. Thus, the claim follows. 

We then apply algorithm $A$ to get the mapping $\nu$. By the assumption on $A$, we know that $\cost(\nu)\leq \beta\cost(\mu)$.  Given the mapping $\nu$ by the algorithm $A$, consider the assignment in the SNN instance where each query $q_i$ is mapped to $\nu(p_i^*)$, and note that since $\nu(p_i^*) \in T$, this would map all points $q_i$  to points in $\hat P$.
Thus, by definition, we have that
\[
\begin{split}
\cost(Q,G_S,\hat P)&\leq \sum_{i=1}^k \dist(q_i, \nu(p_i^*)) + \sum_{(q_i,q_j)\in E_{PW}} \dist(\nu(p_i^*),\nu(p_j^*)) \\
&\leq \sum_{i=1}^k \dist(q_i,\hat p_i)+\sum_{i=1}^k \dist(\hat p_i, \nu(p_i^*)) + \sum_{(q_i,q_j)\in E_{PW}} \dist(\nu(p_i^*),\nu(p_j^*)) \\
&\leq \sum_{i=1}^k \dist(q_i,\hat p_i) +\cost(\nu)  \\
&\leq \cost(Q,G_S , P)+\beta\cost(\mu)  \\
&\leq (2\beta+1)\cost(Q,G_S , P)
\end{split}
\]
\noindent where we have used the triangle inequality. Therefore, we have that the pruning gap $\alpha$ of the INN algorithm is $O(\beta)$, as claimed. 
\end{proof}

Using the previously cited results, and noting that in the above instance $\card{V}=O(k)$, we get the following corollaries.
\begin{corollary}
The INN algorithm has pruning gap $\alpha = O(\log k/\log \log k)$.
\end{corollary}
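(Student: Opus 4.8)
The plan is simply to instantiate the preceding lemma with a concrete $\beta$-natural rounding algorithm and to track the dependence on the number of terminals. The natural choice is the $0$-extension rounding procedure of Fakcharoenphol, Harrelson, Rao, and Talwar~\cite{FHRT}, which achieves approximation ratio $O(\log |T| / \log\log |T|)$. The first step is the bookkeeping observation, already noted in the text: in the $0$-extension instance constructed inside the proof of the previous lemma, the vertex set is $V = \hat{P} \cup P^*$ of size $2k$, so the terminal set $T = \hat{P}$ satisfies $|T| \le k$. Hence any $\beta(|T|)$-natural rounding algorithm, applied to this instance, yields a rounding guarantee with $\beta = \beta(k)$.

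The substantive step is to verify that the FHRT algorithm actually satisfies Definition~\ref{def:rounding-alg} with $\beta = O(\log k/\log\log k)$. Its structure is exactly the ``LP-then-round'' template described before the lemma: one first produces a fractional solution $\mu$ by solving the semimetric relaxation (equivalently, an embedding of $V$ into the metric), and then applies a rounding step based on a randomized low-diameter hierarchical decomposition of the metric, assigning each non-terminal to the location of a suitably chosen nearby terminal. This rounding step manifestly (i) leaves every terminal fixed, (ii) sends every vertex to the position of some terminal, and (iii) increases the length of every edge in expectation by a factor $O(\log |T|/\log\log |T|)$. The key point — flagged in the ``grey box'' discussion — is that the FHRT distortion analysis is carried out per edge (via the probability that the decomposition separates the two endpoints of an edge) and therefore never invokes optimality of $\mu$; it applies verbatim to the particular mapping supplied by the INN reduction, namely the identity embedding of $\hat{P} \cup P^*$. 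Thus FHRT is a $\beta$-natural rounding algorithm with $\beta = O(\log|T|/\log\log|T|)$.

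Combining these two observations with the previous lemma gives $\alpha = O(\beta) = O(\log k/\log\log k)$, which is the claim. The only mild subtlety I would be careful about is that FHRT is randomized, so the cost bound in the third bullet of Definition~\ref{def:rounding-alg} holds in expectation, and consequently the pruning-gap bound is an expectation bound; this can be upgraded to a high-probability statement by independent repetition, or circumvented by using a derandomized decomposition. No computation beyond quoting the ratio of~\cite{FHRT} is needed.
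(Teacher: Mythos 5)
Your proposal is correct and matches the paper's argument: the corollary is obtained by instantiating the preceding lemma with the rounding procedure of Fakcharoenphol, Harrelson, Rao, and Talwar, noting that the constructed $0$-extension instance has $|V|=O(k)$ terminals, so $\beta = O(\log k/\log\log k)$. Your additional observations — that the FHRT per-edge distortion analysis never uses optimality of $\mu$ (the ``grey box'' point) and that the guarantee is in expectation — are consistent with, and slightly more explicit than, the paper's one-line derivation.
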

\begin{corollary}
If the metric space $(X,\dist)$ admits a $\delta$-padding decomposition, then the INN algorithm has pruning gap $\alpha = O(\delta)$. For finite metric spaces $(X,\dist)$, $\delta$ is at most the doubling dimension of the metric space.
\end{corollary}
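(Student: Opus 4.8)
The plan is to reduce the statement directly to the preceding lemma, which says that any $\beta$-natural rounding algorithm for $0$-extension (in the sense of Definition~\ref{def:rounding-alg}) forces the pruning gap of INN to satisfy $\alpha=O(\beta)$. Thus it suffices to exhibit, whenever $(X,\dist)$ is $\delta$-padded decomposable, an $O(\delta)$-natural rounding algorithm for $0$-extension on that metric; the first claim then follows with $\beta=O(\delta)$, and the second follows because a finite metric of doubling dimension $\delta$ is $O(\delta)$-padded decomposable.

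First I would recall the structure of the Lee--Naor $0$-extension algorithm~\cite{leenaor}, which (as with~\cite{CKR,archer,FHRT}) factors into two stages: solve an LP relaxation of $0$-extension, obtaining a feasible fractional solution that canonically yields a mapping $\mu:V\to X$ with $\mu(t)=t$ for every terminal $t$; then round $\mu$ to an integral labeling $\nu:V\to T$ (fixing the terminals) with $\cost(\nu)\le O(\delta)\cdot\cost(\mu)$ when the terminal metric admits a $\delta$-padding decomposition. The step I expect to take the most care is checking that this rounding stage is genuinely \emph{natural}: its $O(\delta)$ guarantee must hold for an arbitrary feasible embedding $\mu$ handed to it, not merely for the LP optimum. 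This is precisely what Lee and Naor prove --- their rounding bounds $\cost(\nu)$ against $\cost(\mu)$ edge by edge using only a random padded partition of the (terminal) metric, and never invokes LP optimality --- so the rounding stage is an $O(\delta)$-natural rounding algorithm, and the previous lemma gives $\alpha=O(\delta)$.

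For the finite-metric part I would note that, in the instance passed to the rounding algorithm inside the proof of the previous lemma, the vertices of $V$ correspond to $\hat P\cup P^*\subseteq P\subseteq X$ and the terminals to $\hat P$, and the query points enter the cost bound only via $\dist(q_i,\hat p_i)\le\dist(q_i,p_i^*)$; hence it suffices that $(X,\dist)$ (or even just its submetric on $P\cup Q$) admit a padding decomposition. When this metric is finite with doubling dimension $\delta$, the fact recalled in the Preliminaries~\cite{doublingdimension} supplies an $O(\delta)$-padding decomposition --- which trivially restricts to any subset, in particular to $\hat P$ --- and the first part then yields $\alpha=O(\delta)$. Everything beyond the ``naturalness'' check is bookkeeping: matching Definition~\ref{def:rounding-alg} to the Lee--Naor rounding, and reusing the triangle-inequality computation already performed in the preceding lemma.
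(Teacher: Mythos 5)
Your proposal is correct and follows essentially the same route as the paper: the corollary is obtained by observing that the Lee--Naor rounding step is an $O(\delta)$-natural rounding algorithm in the sense of Definition~\ref{def:rounding-alg} and invoking the preceding lemma, with the finite-metric claim supplied by the fact from the Preliminaries that doubling dimension $\delta$ yields an $O(\delta)$-padding decomposition. Your explicit check that the Lee--Naor guarantee holds for an arbitrary input embedding $\mu$ (not just the LP optimum) is exactly the ``grey box'' property the paper relies on but leaves implicit.
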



\subsection{Sparse Graphs}\label{sec:sparse}
In this section, we prove that the INN algorithm performs well on sparse graphs. More specifically, here we prove that when the graph $G$ is $r$-sparse, then $\alpha(Q,G,P) = O(r)$. To this end, we show that there exists an assignment using the points in $\hat{P}$ whose cost function is within $O(r)$ of the optimal solution using the points in the original data set $P$.

Given a graph $G$ of pseudoarboricity $r$, we know that we can map each edge to one of its end points such that the number of edges mapped to each vertex is at most $r$. For each edge $e$, we call the vertex that $e$ is mapped to as the \emph{corresponding vertex} of $e$. This would mean that each vertex is the corresponding vertex of at most $r$ edges.

Let $p_1^*,\cdots,p_k^* \in P$ denote the optimal solution of SNN. Algorithm \ref{alg:inn:degenerate} shows how to find an assignment $p_1,\cdots,p_k \in \hat{P}$. We show that the cost of this assignment is within a factor $O(r)$ from the optimum.

\begin{algorithm}[!h]
\caption{$r$-Sparse Graph Assignment Algorithm}
\label{alg:inn:degenerate}
\vspace{0.2cm}
\textbf{Input} Query points $q_1,\cdots,q_k$, Optimal assignment $p_1^*,\cdots,p_k^*$, Nearest Neighbors $\hat{p_1},\cdots,\hat{p_k}$, and the input graph $G=(Q,E)$\\
\textbf{Output} An Assignment $p_1,\cdots,p_k \in \hat{P}$ \\
\vspace{-0.5cm}
\begin{algorithmic}[1]

\vspace{0.2cm}
\FOR{$i=1$ \TO $k$}
	\STATE{Let $j_0 = i$ and let $q_{j_1},\cdots,q_{j_t}$ be all the neighbors of $q_i$ in the graph $G$}
	\STATE{ $m \leftarrow \argmin_{\ell=0}^t \dist(p_i^*,p_{j_\ell}^*) + \dist(p_{j_\ell}^* , q_{j_\ell})$}
	\STATE{Assign $p_i \leftarrow \hat{p}_{j_m}$}
\ENDFOR
\end{algorithmic}
\end{algorithm}

\begin{lemma}\label{lem:degenerate}
The assignment defined by Algorithm \ref{alg:inn:degenerate}, has $O(r)$ approximation factor.
\end{lemma}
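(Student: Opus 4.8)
The plan is to prove directly that the specific assignment $p_1,\dots,p_k\in\hat P$ produced by Algorithm~\ref{alg:inn:degenerate} has total cost at most $O(r)\cdot\cost(Q,G,P)$; since this assignment uses labels only from $\hat P$, this simultaneously shows $\alpha(Q,G,P)=O(r)$. Write $N=\sum_i\dist(q_i,p_i^*)$ and $W=\sum_{(q_i,q_j)\in E}\dist(p_i^*,p_j^*)$, so $\cost(Q,G,P)=N+W$. I would bound the NN cost and the PW cost of the new assignment separately, the whole point being that the pseudoarboricity only enters in the PW part. Throughout I will use repeatedly the elementary fact that, because the algorithm sets $p_i=\hat p_{j_m}$ for the minimizer $m$, plugging $\ell=0$ (i.e.\ $j_0=i$) into the $\argmin$ gives $\dist(p_i^*,p_{j_m}^*)+\dist(p_{j_m}^*,q_{j_m})\le\dist(p_i^*,q_i)$, so each summand on the left is at most $\dist(p_i^*,q_i)$, and that $\hat p_{j_m}$ is the nearest database point to $q_{j_m}$, hence $\dist(q_{j_m},\hat p_{j_m})\le\dist(q_{j_m},p_{j_m}^*)$.

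For the NN part: fix $i$ and let $p_i=\hat p_{j_m}$. Walking along $q_i\to p_i^*\to p_{j_m}^*\to q_{j_m}\to\hat p_{j_m}$ and applying the two facts above (so $\dist(p_i^*,p_{j_m}^*)+\dist(p_{j_m}^*,q_{j_m})\le\dist(p_i^*,q_i)$ and $\dist(q_{j_m},\hat p_{j_m})\le\dist(q_{j_m},p_{j_m}^*)\le\dist(p_i^*,q_i)$) yields $\dist(q_i,p_i)\le 3\dist(q_i,p_i^*)$. Summing, the NN cost of the new assignment is at most $3N$.

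For the PW part — the step that genuinely requires care — orient the edges so that each edge $e$ is mapped to a corresponding endpoint $c(e)$ with at most $r$ edges mapped to each vertex. Consider an edge $e=(q_a,q_b)$ with $c(e)=q_a$; the goal is a bound $\dist(p_a,p_b)\le 6\,\dist(p_a^*,q_a)+4\,\dist(p_a^*,p_b^*)$ that involves only $q_a$'s NN cost and the edge's own PW cost. The trick is to exploit the $\argmin$ asymmetrically: at the corresponding vertex $q_a$ use the $\ell=0$ bound, giving $\dist(p_a^*,p_{j_m}^*),\ \dist(p_{j_m}^*,q_{j_m})\le\dist(p_a^*,q_a)$; at the other endpoint $q_b$ use that $q_a$ \emph{is} a neighbor of $q_b$, so that candidate in $q_b$'s $\argmin$ gives $\dist(p_b^*,p_{j_{m'}}^*)+\dist(p_{j_{m'}}^*,q_{j_{m'}})\le\dist(p_b^*,p_a^*)+\dist(p_a^*,q_a)$, hence each of those summands is at most $\dist(p_a^*,p_b^*)+\dist(p_a^*,q_a)$. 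Now expand $\dist(p_a,p_b)=\dist(\hat p_{j_m},\hat p_{j_{m'}})$ by triangle inequality along $\hat p_{j_m}\to q_{j_m}\to p_{j_m}^*\to p_a^*\to p_b^*\to p_{j_{m'}}^*\to q_{j_{m'}}\to\hat p_{j_{m'}}$, bound the two terms incident to $\hat p_{j_m}$ and $\hat p_{j_{m'}}$ via the nearest-neighbor property, and substitute; collecting coefficients gives exactly $6\,\dist(p_a^*,q_a)+4\,\dist(p_a^*,p_b^*)$.

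Finally, sum over all edges. The $4\,\dist(p_a^*,p_b^*)$ terms sum to $4W$. The $6\,\dist(p_{c(e)}^*,q_{c(e)})$ terms, grouped by corresponding vertex, contribute at most $6r\sum_v\dist(p_v^*,q_v)=6rN$, since at most $r$ edges map to each vertex — this is exactly where $r$-sparsity is used, and it is essential that the per-edge bound involved only $q_a$'s NN cost and never $q_b$'s, because a vertex may be the non-corresponding endpoint of arbitrarily many edges. Adding the NN and PW estimates, the new assignment has cost at most $3N+6rN+4W\le(6r+4)(N+W)=O(r)\cdot\cost(Q,G,P)$, which proves the lemma. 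The main obstacle is precisely engineering the per-edge inequality so that the non-corresponding endpoint's nearest-neighbor cost never appears: using the $\ell=0$ bound naively at \emph{both} endpoints produces a $\dist(p_b^*,q_b)$ term that cannot be charged $O(r)$ times, and the resolution is to use the neighbor-candidate bound at the head of the oriented edge.
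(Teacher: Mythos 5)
Your proposal is correct and follows essentially the same argument as the paper's proof: a factor-$3$ bound on the NN cost via the $\ell=0$ candidate in the $\argmin$, and a per-edge PW bound that uses the $\ell=0$ candidate at the corresponding endpoint and the neighbor candidate at the other endpoint, so that each edge is charged only to its own pairwise cost and to the NN cost of its corresponding vertex (hence the factor $r$). The only difference is that your per-summand bounding yields the slightly looser constant $6r+4$ versus the paper's $4r+3$, which is immaterial for the $O(r)$ claim.
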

\begin{proof}
For each $q_i\in Q$, let $y_i = \dist(p_i^* , q_i)$ and for each edge $e=(q_i,q_j) \in E$ let $x_e = \dist(p_i^* , p_j^*)$. 
Also let $Y=\sum_{i=1}^k y_i$ and $X=\sum_{e\in E} x_e$. Note that $Y$ is the NN cost and $X$ is the PW cost of the optimal assignment and that $OPT = \cost(Q,G,P) = X +Y$. 
Define the variables $y_i\rq{}$, $x_e\rq{}$, $Y\rq{}$ , $X\rq{}$ in the same way but for the assignment $p_1,\cdots,p_k$ produced by the algorithm. 
That is, for each $q_i\in Q$, $y_i'= \dist(p_i , q_i)$, and for each edge $e=(q_i,q_j) \in E$, $x_e' = \dist(p_i , p_j)$. 
Moreover, for a vertex $q_i$, we define the \textit{designated neighbor} of $q_i$ to be $q_{j_m}$ for the value of $m$ defined in the line 3 of Algorithm \ref{alg:inn:degenerate} (note that the designated neighbor might be the vertex itself).
Fix a vertex $q_i$ and let $q_c$ be the designated neighbor of $q_i$. We can bound the value of $y_i\rq{}$ as follows.
\[
\begin{split}
y_i\rq{} = \dist(q_i,p_i) &= \dist(q_i,\hat{p}_{c}) \\
&\leq \dist(q_i,p_i^*) + \dist(p_i^*,p_c^*) + \dist(p_c^* , q_c) + \dist(q_c,\hat{p}_c) \quad \mbox{(by triangle inequality)} \\
&\leq y_i + \dist(p_i^*,p_c^*) + 2\dist(p_c^*,q_c) \quad \mbox{(since $\hat{p_c}$ is the nearest neighbor of $q_c$)} \\
&\leq y_i + 2[\dist(p_i^*,p_c^*) + \dist(p_c^*,q_c)] \\
&\leq 3y_i \quad \mbox{(by definition of  designated neighbor and the value $m$ in line 3 of Algorithm \ref{alg:inn:degenerate})}\\
\end{split}
\]

\noindent Thus summing over all vertices, we get that $Y'\leq 3Y$. Now for any fixed edge $e = (q_i,q_s)$ (with $q_i$ being its corresponding vertex), let $q_c$ be the designated neighbor of $q_i$, and $q_z$ be the designated neighbor of $q_s$. 
Then we bound the value of $x_e\rq{}$ as follows.
\[
\begin{split}
x_e\rq{} &= \dist(p_i,p_s) = \dist(\hat{p}_c,\hat{p}_{z}) \quad \mbox{(by definition of designated neighbor and line 4 of Algorithm \ref{alg:inn:degenerate})} \\\
&\leq \dist(\hat{p}_c,q_c) + \dist(q_c,p_c^*) + \dist(p_c^*,p_i^*) + \dist(p_i^*,p_s^*) \\
&+ \dist(p_s^*,p_{z}^*) + \dist(p_{z}^*,q_{z})+ \dist(q_{z},\hat{p}_{z}) \quad \mbox{(by triangle inequality)}\\
&\leq 2 \dist(q_c,p_c^*) + \dist(p_c^*,p_i^*) + \dist(p_i^*,p_s^*) \\
&+ \dist(p_s^*,p_{z}^*) + 2\dist(p_{z}^*,q_{z}) \quad \mbox{(since $\hat{p}_c (\hat{p}_{z}\mbox{ respectively})$ is  a NN of $q_c(q_{z}\mbox{ respectively})$)} \\
&\leq 2 [\dist(q_c,p_c^*) + \dist(p_c^*,p_i^*)] + \dist(p_i^*,p_s^*) + 2[\dist(p_s^*,p_{z}^*) + \dist(p_{z}^*,q_{z})] \\
&\leq 2 y_i + x_e+ 2 [x_e+y_i] \quad \mbox{(since $q_c (q_z \mbox{ respectively}) $ is designated neighbor of $q_i(q_s \mbox{ respectively})$)} \\
&\leq 4(x_e+y_i)
\end{split}
\]
\noindent Hence, summing over all the edges, since each vertex $q_i$ is the corresponding vertex of at most $r$ edges, we get that $X'\leq 4X + 4rY$. Therefore we have the following.
\[
\cost(Q,G,\hat{P}) \leq X\rq{}+Y\rq{} \leq 3Y + 4X + 4rY \leq (4r+3)\cdot \cost(Q,G,P)
\]
\noindent and thus $\alpha(Q,G,P) = O(r)$.
\end{proof}

\section{Lower bound}\label{sec:lowerbound}
In this section we prove a lower bound of $\Omega(\sqrt{\log k})$ for the approximation factor of the INN algorithm. Furthermore, the lower bound example presented in this section is a graph (in fact a multi-graph) that has pseudoarboricity equal to $O(\sqrt {\log k})$, showing that in a way, the upper bound of $\alpha = O(r)$ for the $r$-sparse graphs is tight. More specifically, we show that for $r\leq \sqrt{\log k}$, we have $\alpha = \Omega(r)$.
We note that the lower bound construction presented in this paper is similar to the approach of \cite{CKR} for proving a lower bound for the integrality ratio of the LP relaxation for the $0$-extension problem. 

\begin{lemma}
For any value of $k$, there exists a set of points $P$ of size $O(k)$ in a metric space $X$, and a query $(Q,G)$ such that $\card{Q}=k$ and the pruning step induces an approximation factor of at least $\alpha(Q,G,P) = \Omega(\sqrt{\log k})$.
\end{lemma}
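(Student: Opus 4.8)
The plan is to mimic the integrality-gap instance of Calinescu et al.~\cite{CKR} for the semimetric relaxation of $0$-extension, reinterpreting its cheap fractional solution as a set of extra points that lie in $P$ but are never selected as nearest neighbors. Concretely, I would start from a bounded-degree expander multigraph $H$ on $\Theta(k)$ vertices whose underlying simple graph has girth $\Omega(\log k)$, with every edge taken with multiplicity a parameter $r$; this will simultaneously give the $\alpha=\Omega(r)$ claim for all $r\le\sqrt{\log k}$, with $r=\Theta(\sqrt{\log k})$ yielding $\Omega(\sqrt{\log k})$. The query set is $Q=\{q_v:v\in V(H)\}$ and the compatibility multigraph $G$ is the copy of $H$ on these vertices (so $G$ has pseudoarboricity $\Theta(r)$). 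The metric on $P\cup Q$ would be the shortest-path metric of an auxiliary weighted graph gluing together three pieces: (i) a ``terminal'' point $\hat p_v$ for every $v$, with the $\hat p_v$'s carrying a scaled copy of the shortest-path metric of $H$; (ii) each $q_v$ joined to $\hat p_v$ by a short pendant edge, calibrated so that $\hat p_v$ is the unique nearest neighbor of $q_v$ in $P$ (hence INN's pruned set is exactly $\hat P=\{\hat p_v\}$); and (iii) a family of $O(k)$ ``Steiner'' points, placed slightly farther from each $q_v$ than $\hat p_v$ is, arranged so as to realize the CKR fractional solution as an honest integral assignment.

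For the easy direction I would exhibit the Steiner-based assignment $p^*_v$ and bound $\cost(Q,G,P)$ from above by computing its NN and PW costs: the point of the Steiner points is that they let the edges of $G$ be ``routed through the interior'' instead of along the expander metric between the $\hat p_v$'s, so the PW cost collapses while the NN cost stays within a constant factor of $\sum_v\dist(q_v,\hat p_v)$.

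The real work, and where I expect the main difficulty, is the lower bound $\cost(Q,G,\hat P)\ge\Omega(\sqrt{\log k})\cdot\cost(Q,G,P)$. Note that an assignment using only $\hat P$ is just a map $f:V(H)\to V(H)$ with cost $\sum_v\dist(q_v,\hat p_{f(v)})+\sum_{(u,v)\in E(G)}\dist(\hat p_{f(u)},\hat p_{f(v)})$; since $\hat p_v$ is $q_v$'s nearest neighbor and the $\hat p$'s carry the (scaled) expander metric, this equals, up to constant factors, the $0$-extension cost of the CKR instance, so lower-bounding it amounts to lower-bounding the $0$-extension optimum. I would then reproduce the CKR argument: the girth assumption makes the terminal metric look locally like a $d$-regular tree, so displacing $f(v)$ away from $v$ costs additively in the tree distance; expansion prevents $f$ from collapsing $\Omega(k)$ vertices into few low-diameter clusters without cutting $\Omega(|E(G)|)$ of the (multi-)edges; and the optimal trade-off between the clustering radius and the number of cut edges is governed by a Poincar\'e-type inequality for the expander, which is exactly what yields a $\sqrt{\log k}$ (rather than a constant or a $\log k$) factor, and more generally an $\Omega(\min(r,\sqrt{\log k}))$ factor at multiplicity $r$. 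The delicate steps I anticipate are (a) choosing the auxiliary-graph edge lengths so that every triangle inequality survives the gluing of the expander-metric piece, the pendant edges, and the Steiner cluster, while still keeping every Steiner point strictly farther from $q_v$ than $\hat p_v$; and (b) pushing the expander counting/Poincar\'e estimate through uniformly in $r$, so that the same family certifies $\alpha=\Omega(r)$ for every $r\le\sqrt{\log k}$, not merely $\alpha=\Omega(\sqrt{\log k})$.
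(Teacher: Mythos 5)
Your skeleton is the same as the paper's: an expander-based instance in the spirit of the CKR integrality-gap construction, queries attached by pendants, a compatibility multigraph obtained by taking each expander edge with multiplicity $r$ (with $r=\Theta(\sqrt{\log k})$ for the main statement), good labels sitting on the expander vertices, and a lower bound driven by expansion. But the calibration you defer to ``delicate step (a)'' is not a technicality, and the guideline you state would defeat the construction. You attach $\hat{p}_v$ to $q_v$ by a \emph{short} pendant edge and place the good (Steiner) labels only \emph{slightly farther} from $q_v$ than $\hat{p}_v$. Write $\ell$ for these (small) distances and $\sigma$ for the distance between the good labels of adjacent expander vertices. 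The triangle inequality through $q_u,s_u,s_v,q_v$ forces $\dist(\hat{p}_u,\hat{p}_v)\le \sigma+O(\ell)$ for adjacent $u,v$, so the metric on $\hat{P}$ cannot be the expander metric ``scaled up'' by more than a $1+O(\ell/\sigma)$ factor on edges; the identity assignment into $\hat{P}$ then has NN cost $k\ell$ and PW cost at most $|E_G|(\sigma+O(\ell))$, i.e.\ at most a constant factor above $\cost(Q,G,P)$, and no super-constant pruning gap can appear. The gap has to be paid for by a \emph{large} NN cost in the good solution: in the paper the queries are data points, so $\hat{p}_i=u_i=q_i$ at distance $0$; the good label $v_i$ sits at distance $\sqrt{\log k}$ from $q_i$; adjacent $v_i,v_j$ are at distance $1$, hence adjacent $u_i,u_j$ are at distance $2\sqrt{\log k}+1$. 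With multiplicity $\sqrt{\log k}$ this gives $\cost(Q,G,P)=O(k\sqrt{\log k})$, while every multi-edge cut by a pruned assignment costs at least $2\sqrt{\log k}$. Making the good labels far from the queries (at distance $\Theta(r\sigma)$, not ``slightly farther than $\hat p_v$'') is the one idea your sketch is missing.

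On the lower bound itself, your proposed machinery (girth $\Omega(\log k)$, local tree structure, a Poincar\'e-type inequality) is heavier than what is needed and is never used in the paper. The paper's argument is a three-case count: if no label class contains at least $k/2$ queries, constant vertex expansion forces $\Omega(k)$ cut expander edges, hence $\Omega(k\sqrt{\log k})$ cut multi-edges, each costing at least $2\sqrt{\log k}$, for PW cost $\Omega(k\log k)$; if some label class has at least $k/2$ queries assigned to a single $u_j$, then ball growth in a constant-degree graph (a ball of radius $\tfrac{1}{2}\log_d k$ has at most $O(\sqrt{k})$ vertices) puts $\Omega(k)$ of them at distance $\Omega(\log k)$ from $u_j$, for NN cost $\Omega(k\log k)$. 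Either way the pruned cost is $\Omega(k\log k)=\Omega(\sqrt{\log k})\cdot\cost(Q,G,P)$, and replacing the multiplicity $\sqrt{\log k}$ by $r\le\sqrt{\log k}$ gives the $\Omega(r)$ variant verbatim. Your route could presumably be made to work, but you would still have to supply the quantitative expansion step, and the girth hypothesis buys you nothing here.
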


\begin{proof}
\noindent In what follows, we describe the construction of the lower bound example.

Let $H=(V,E)$ be an expander graph with $k$ vertices $V=\{v_1,\cdots, v_k\}$ such that each vertex has constant degree $d$ and the vertex expansion of the graph is also a constant $c$. Let $H'=(V',E',W')$ be a weighted graph constructed from $H$ by adding $k$ vertices $\{u_1,\cdots, u_k\}$ such that each new vertex $u_i$  is a leaf connected to $v_i$ with an edge of weight $\sqrt {\log k}$. All the other edges between $\{v_1,\cdots, v_k\}$ (which were present in $H$) have weight $1$. This graph $H'$ defines the metric space $(X,\dist)$ such that $X$ is the set of nodes $V'$ and $\dist$ is the weight of the shortest path between the nodes in the graph $H'$.  Moreover, let $P=V'$ be all the vertices in the graph $H'$.

\noindent Let the set of $k$ queries be $Q = V'\setminus V = \{u_1,\ldots,u_k\}$.
Then, while running the INN algorithm, the set of candidates $\hat{P}$ would be the queries themselves, i.e., $\hat{P} = Q = \{u_1,\cdots,u_k\}$. Also, let the input graph $G=(Q,E_G)$ be a multi-graph which is obtained from $H$ by 
replacing each edge $(v_i,v_j)$ in $H$ with $\sqrt{\log k}$ copies of the
edge $(u_i,u_j)$ in $G$. 
This is the input graph given along with the $k$ queries to the algorithm.

\noindent Consider the solution $P^* = \{p_1^*, \cdots, p_k^*\}$ where $p_i^* = v_i$.  The cost of this solution is 
\[
\sum_{i=1}^k \dist(q_i,p_i^*)+\sum_{(u_i,u_j)\in E_G}\dist(v_i,v_j) = k\sqrt{\log k} + kd\sqrt {\log k} /2
\]
Therefore, the cost of the optimal solution $OPT=\cost(Q,G,P)$ is at most $O(k\sqrt {\log k})$. Next, consider the optimal labeling $\hat{P}^*=\{\hat{p_1^*}, \cdots, \hat{p_k^*}\}\subseteq \hat P$ using only the points in $\hat{P}$. This optimal assignment has one of the following forms.

\noindent \textbf{Case 1:} For all $1\leq i \leq k$, we have $\hat{p_i^*} = u_i$. The cost of $\hat{P}^*$ in this case would be
\[
\cost(Q,G,\hat{P})  = \sum_{i=1}^k \dist(q_i,u_i) + \sum_{(u_i,u_j)\in E_G} \dist(u_i,u_j) \geq 0 + |E_G| \cdot 2\sqrt{\log k} \geq {d k\over2} \log k
\]
\noindent Thus the cost in this case would be $\Omega(OPT\sqrt {\log k})$.

\noindent \textbf{Case 2:} All the $\hat{p_i^*}$ 's are equal. Without loss of generality suppose they are all equal to $u_1$. Then the cost would be:
\[
\cost(Q,G,\hat{P}) = \sum_{i=1}^k \dist(q_i,u_1) + \sum_{(u_i,u_j)\in E_G} \dist(u_1,u_1) \geq \Omega(k\log k) +0 
\]
\noindent This is true because in an expander graph with constant degree, the number of vertices at distance less than $\log_d k \over 2$ of any vertex is at most $1+d+\cdots, d^{\log_d k \over 2} \leq 2\sqrt k$. Thus $\Theta(k)$ vertices are farther than ${\log_d k \over 2} = {\log k \over 2\log d} = \Theta(\log k)$.
Thus, again the cost of the assignment $\hat{P}$ in this case would be $\Omega(OPT \sqrt{\log k})$.

\noindent \textbf{Case 3:} Let $S=\{S_1,\cdots, S_t\}$ be a partition of $[k]$ such that each part corresponds to all the indices $i$ having their $\hat{p_i^*}$ equal. That is, for each $1\leq j \leq t$, we have $\forall i,i' \in S_j: \hat{p_i^*} = \hat{p_{i'}^*}$.  Now, two cases are possible.
First if all the parts $S_j$ have size at most $k/2$. In this case, since the graph $H$ has expansion $c$, the total number of edges between different parts would be at least  
\[
\left|\{(u_i,u_j)\in E_G | \hat{p_i^*} \neq \hat{p_j^*}\}\right|\geq {1\over 2}\sum_{j=1}^t c|S_j|\sqrt{\log k} \geq kc\sqrt{\log k}/2
\]
\noindent Therefore similar to Case 1 above, the PW cost would be at least $kc\sqrt{\log k}/2 \cdot \sqrt{\log k} = \Omega(k\log k)$.
Otherwise, at least one of the parts such as $S_j$ has size at least $k/2$. In this case, similar to Case 2 above, the NN cost would be at least $\Omega (k\log k)$. Therefore, in both cases the cost of the assignment $\hat{P}^*$ would be at least $\Omega(OPT \sqrt{\log k})$. Hence, the pruning gap of the INN algorithm on this graph is $\Omega(\sqrt{\log k})$.
\end{proof}

Since the degree of all the vertices in the above graph is $d\sqrt {\log k}$, the pseudoarboricity of the graph is  also $\Theta(\sqrt {\log k})$. It is easy to check that if we repeat each edge $r$ times instead of $\sqrt{\log k}$ times in $E_G$ in the above proof, the same arguments hold and we get the following corollary.

\begin{corollary}
For any value of $r\leq \sqrt{\log k}$, there exists an instance of SNN(Q,G,P) such that the input graph $G$ has arboricity $O(r)$ and that the pruning gap of the INN algorithm is $\alpha(Q,G,P)=\Omega(r)$.
\end{corollary}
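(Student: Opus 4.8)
The plan is to reuse the construction from the proof of the preceding lemma essentially verbatim, changing only the multiplicity of the edges of the compatibility multigraph from $\sqrt{\log k}$ to $r$. Concretely: keep the constant-degree ($d$), constant-expansion ($c$) expander $H=(V,E)$ on $k$ vertices; attach a leaf $u_i$ to each $v_i$ by an edge of weight $\sqrt{\log k}$; keep all $H$-edges at weight $1$; let $(X,\dist)$ be the resulting shortest-path metric, $P=V'=V\cup\{u_1,\dots,u_k\}$, and $Q=\{u_1,\dots,u_k\}$, so that $\hat P=Q$ since each $u_i\in P$ is its own nearest neighbor. Now let $G$ be the multigraph on $Q$ obtained by replacing each edge $(v_i,v_j)$ of $H$ by $r$ parallel copies of $(u_i,u_j)$, so $|E_G|=\tfrac{kd}{2}r$ and every vertex of $G$ has degree $dr$.

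The first step is the upper bound on $\opt$: the assignment $p_i^*=v_i$ has NN cost $\sum_i\dist(u_i,v_i)=k\sqrt{\log k}$ and PW cost $|E_G|=\tfrac{kd}{2}r\le\tfrac{kd}{2}\sqrt{\log k}$, so $\cost(Q,G,P)=O(k\sqrt{\log k})$. The second step is the lower bound $\cost(Q,G,\hat P)=\Omega(kr\sqrt{\log k})$, obtained by rerunning the three-case analysis of the lemma with the new value of $|E_G|$. In Case 1 ($\hat p_i^*=u_i$ for all $i$) the PW cost is at least $|E_G|\cdot 2\sqrt{\log k}=\Omega(kr\sqrt{\log k})$. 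In Case 2 (all labels equal) and in the ``unbalanced'' sub-case of Case 3, the NN cost is $\Omega(k\log k)$ exactly as in the lemma (this part of the argument uses only the expander distances and is untouched by the edge multiplicity), and $k\log k\ge kr\sqrt{\log k}$ because $r\le\sqrt{\log k}$. In the ``balanced'' sub-case of Case 3, edge expansion of $H$ forces $\Omega(kr)$ edges of $G$ to have differently-labeled endpoints, each of length $\ge\sqrt{\log k}$, giving PW cost $\Omega(kr\sqrt{\log k})$. Hence $\cost(Q,G,\hat P)=\Omega(kr\sqrt{\log k})$ in every case.

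Dividing the two bounds gives $\alpha(Q,G,P)\ge\Omega(kr\sqrt{\log k})/O(k\sqrt{\log k})=\Omega(r)$. It remains to observe that the instance has the claimed size and sparsity: $|Q|=k$, $|P|=2k=O(k)$, and since in $G$ every vertex has degree $dr$ (so any $S\subseteq Q$ spans at most $\tfrac{dr|S|}{2}$ edges), Nash--Williams gives arboricity, hence pseudoarboricity, $O(dr)=O(r)$.

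I do not expect a genuine obstacle here; this is a parameter substitution into an argument already carried out. The one point that must not be skipped is that the two NN-cost lower bounds come out as $\Omega(k\log k)$ rather than scaling with $r$, so one must check $k\log k\ge kr\sqrt{\log k}$, i.e.\ $r\le\sqrt{\log k}$, to conclude that these bounds are still $\Omega(kr\sqrt{\log k})=\Omega(r)\cdot\opt$; this is precisely where the hypothesis $r\le\sqrt{\log k}$ enters, and why the corollary is stated only in that range. One should also recheck the two structural facts implicit in the lemma: the weight-$\sqrt{\log k}$ leaf edges do not shorten any $H$-distance (each $u_i$ has degree one), and $\hat P=Q$ because every query $u_i$ lies in $P$.
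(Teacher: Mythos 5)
Your proposal is correct and matches the paper's intended argument exactly: the paper's proof of this corollary is the single remark that one repeats each expander edge $r$ times instead of $\sqrt{\log k}$ times and reruns the same three-case analysis. You have simply filled in the details the paper leaves implicit, including the key check that the $\Omega(k\log k)$ NN-cost bounds dominate $kr\sqrt{\log k}$ precisely because $r\le\sqrt{\log k}$.
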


\section{Experiments}
We consider image denoising as an application of our algorithm. A popular approach to denoising (see e.g. \cite{treemetric}) is to minimize the following objective function:
\[
 \sum_{i\in V} \kappa_i d(q_i, p_i) +  \sum_{(i,j)\in E}\lambda_{i,j} d(p_i,p_j)
\]
\noindent Here $q_i$ is the color of pixel $i$ in the noisy image, and $p_i$ is the color of pixel $i$ in the output.
We use the standard 4-connected neighborhood system for the edge set $E$, and use Euclidean distance as the distance function $d(\cdot,\cdot)$. We also set all weights $\kappa_i$ and $\lambda_{i,j}$ to 1.

When the image is in grey scale, this objective function can be optimized approximately and efficiently using message passing algorithm, see e.g. \cite{earlyvision}. However, when the image pixels are points in RGB color space, the label set becomes huge ($n=256^3=16,777,216$), and most techniques for metric labeling are not feasible.  

Recall that our algorithm proceeds by considering only the nearest neighbor labels of the query points,  i.e., only  the colors that appeared in the image. In what follows we refer to this reduced set of labels as the {\em image color} space, as opposed to  the {\em full color} space where no pruning is performed.  

In order to optimize the objective function efficiently, 
we use the technique of \cite{treemetric}.
We first embed the original (color) metric space into a tree metric (with $O(\log n)$ distortion), and then apply a top-down divide and conquer algorithm on the tree metric, by calling the alpha-beta swap subroutine \cite{gco}. 
%
 We use the random-split kd-tree for both the full color space and the image color space. When constructing the kd-tree, split each interval  $[a,b]$ by selecting a random number chosen uniformly at random from the interval $[0.6a+0.4b,0.4a+0.6b]$.

To evaluate the performance of the two algorithms, we use one cartoon image with MIT logo and  two images from the Berkeley segmentation dataset \cite{berkeleyseg} which was previously used in other computer vision papers \cite{treemetric}. 
We use Matlab imnoise function to create noisy images from the original images.
We run each instance  $20$ times, and compute both the average and the variance of the objective function
(the variance is due to the random generating process of kd tree).

\begin{table}
\centering
 \begin{tabular}{ c | c | c |c}
   \hline
   & Avg cost for full color & Avg cost for image color&
   Empirical pruning gap\\
   \hline
   MIT & $341878 \pm 3.1\%$ & $340477 \pm 1.1$\% &0.996\\ \hline
   Snow & $9338604 \pm 4.5\%$ & $9564288 \pm 6.2\%$ &1.024\\ \hline
   Surf & $8304184 \pm 6.6\%$ & $7588244 \pm 5.1\%$ &0.914\\
   \hline
 \end{tabular}
\caption{The empirical values of objective functions for the respective images and algorithms.}
\label{tab:costfunc}
\end{table}

\begin{table}[!h]
\centering
 \begin{tabular}{ccc}

  \includegraphics[width=43mm]{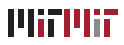}
&
  \includegraphics[width=43mm]{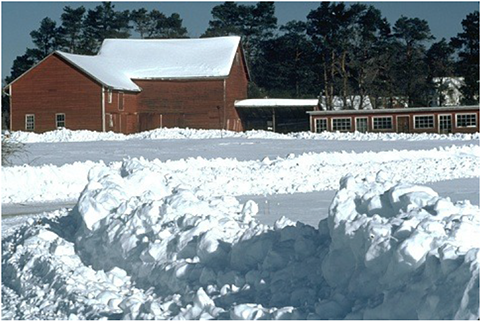}
&
  \includegraphics[width=43mm]{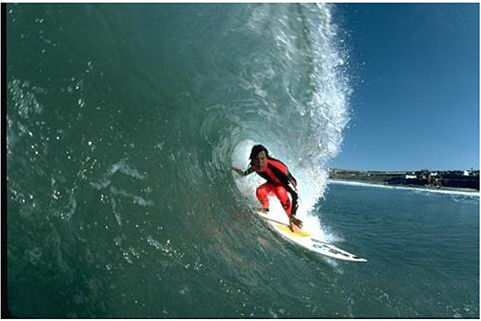}
\\
  \includegraphics[width=45mm]{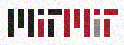}
&
  \includegraphics[width=45mm]{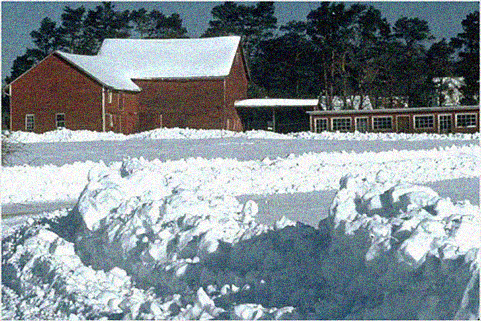}
&
  \includegraphics[width=45mm]{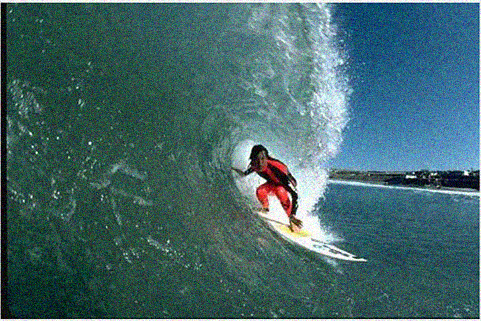}
\\
  \includegraphics[width=45mm]{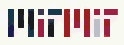}
&
  \includegraphics[width=45mm]{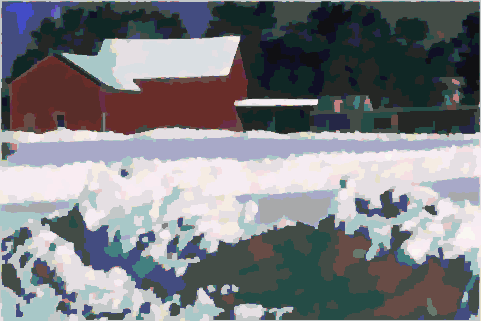}
&
  \includegraphics[width=45mm]{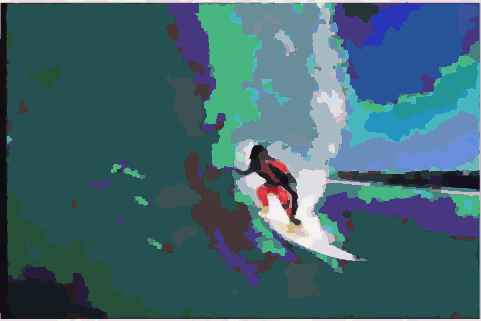}
\\
    \includegraphics[width=45mm]{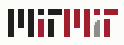}
&
  \includegraphics[width=45mm]{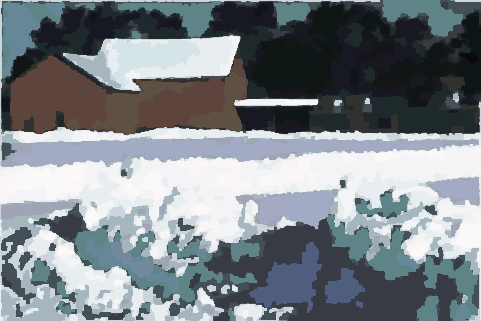}
&
   \includegraphics[width=45mm]{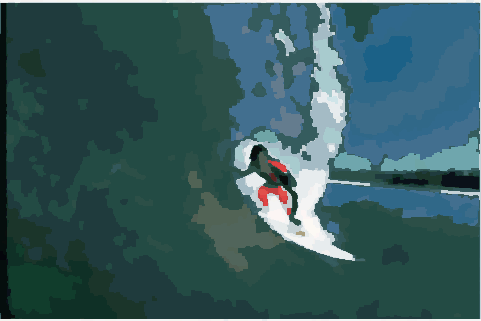}
\\
\end{tabular}
\caption{MIT logo (first column, size $45*124$), and two images from the Berkeley segmentation dataset \cite{berkeleyseg} (second \& third columns, size $321*481$). The first row shows the original image; the second row shows the noisy image; the third row shows the denoised image using full color space; the fourth row shows the denoised image using image space (our algorithm).
}
\label{fig:results}
\end{table}

The results are presented in Figure \ref{fig:results} and Table \ref{tab:costfunc}. 
In Figure \ref{fig:results}, one can see that the images produced by the two algorithms are comparable.
The full color version seems to preserve a few more details than the image color version, but it also ``hallucinates'' non-existing colors to minimize the value of the objective function.
The visual quality of the de-noised images can be improved by fine-tuning various parameters of the algorithms. 
We do not report these results here, as our goal  was to compare the values of the objective function produced by the two algorithms, as opposed to developing the state of the art de-noising system. 

Note that, as per Table \ref{tab:costfunc}, for some images the value of the objective function is sometimes {\em lower} for the image color space compared to the full color space. This is because we cannot solve the optimization problem exactly.  In particular, using the kd tree to embed the original metric space into a tree metric is an approximate  process. 

\subsection{De-noising with patches}
To improve the quality of the de-noised images, we run the experiment for {\em patches} of the image, instead of pixels. Moreover, we use Algorithm \ref{alg:twod} which implements not only a pruning step, but also computes the solution directly.
In this experiment (see Figure \ref{fig:results2} for a sample of the results), each patch (a grid of pixels) from the noisy image is a query point, and the dataset consists of available patches which we use as a substitute for a noisy patch. 

In our experiment, to build the dataset, we take one image from the Berkeley segmentation data set, then add noise to the right half of the image, and try to use the patches from the left half to denoise the right half. Each patch is of size $5 \times5$ pixels. We obtain $317\times236$ patches from the left half of the image and use it as the patch database. Then we apply Algorithm \ref{alg:twod} to denoise the image. In particular, for each noisy patch $q_n$ (out of $317\times237$ patches) in the right half of the image, we perform a linear scan to find the closest patch $p_i$ from the patch database, based on the following cost function:
\[
dist(q_n, p_i)+ \sum_{p_j\in neighbor(q_n)} \frac{dist(p_j, p_i)}{5}
\]
where $dist(p, q)$ is defined to be the sum of squares of the $l_2$ distances between the colors of corresponding pixels in the two patches.

After that, for each noisy patch we retrieve the closest patch from the patch database. Then for each noisy pixel $x$, we first identify all the noisy patches (there are at most 25 of them) that cover it. The denoised color of this pixel $x$ is simply the average of all the corresponding pixels in those noisy patches which cover $x$.

Since the nearest neighbor algorithm is implemented using a linear scan, it takes around 1 hour to denoise one image. One could also apply some more advanced techniques like locality sensitive hashing to find the closest patches with much faster running time.

\begin{table}[!h]
\centering
 \begin{tabular}{ccc}
  \includegraphics[width=43mm]{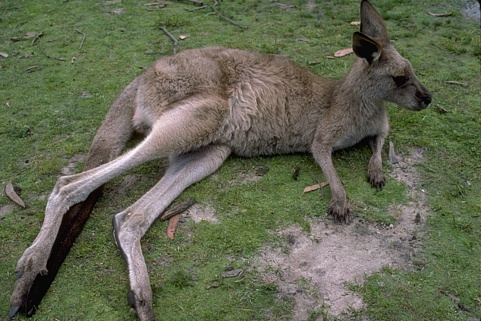}
&
  \includegraphics[width=43mm]{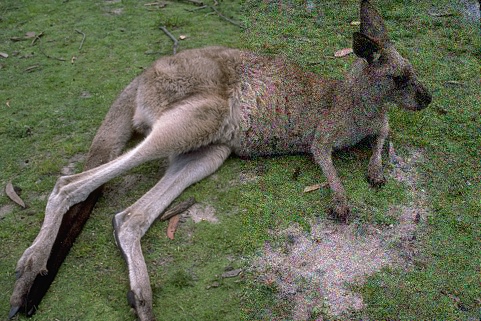}
&
  \includegraphics[width=43mm]{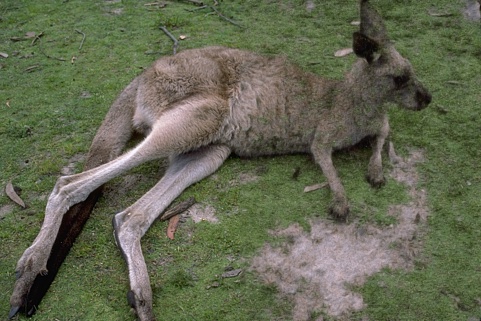}

\\
  \includegraphics[width=43mm]{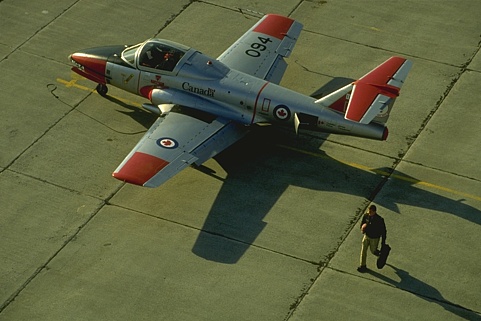}
&
  \includegraphics[width=43mm]{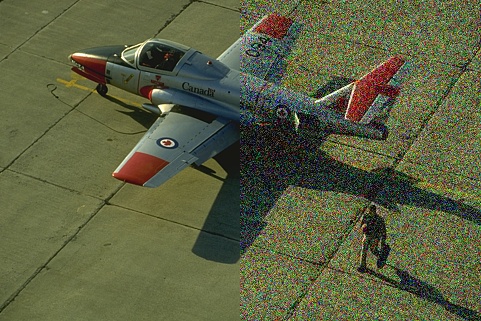}
&
  \includegraphics[width=43mm]{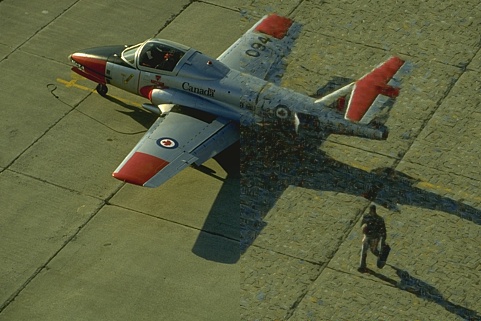}
\\
\end{tabular}
\caption{Two images from the Berkeley segmentation dataset \cite{berkeleyseg} (size $321*481$). The first column shows the original image; the second column shows the half noisy image; the third column shows the de-noised image using our algorithm for the patches.
}
\label{fig:results2}
\end{table}

\subparagraph*{Acknowledgements}
The authors would like to thank Pedro Felzenszwalb for formulating the Simultaneous Nearest Neighbor problem, as well as many helpful discussions about the experimental setup.



\section{$2r+1$ approximation}
\label{s:2r1}
Motivated by the importance of the $r$-sparse graphs in applications, in this section we focus on them and present another algorithm (besides INN) which solves the SNN problem for these graphs. We note that unlike INN, the algorithm presented in this section is not just a pruning step, but it solves the whole SNN problem.

For a graph $G=(Q,E)$ of pseudoarboricity $r$, let the mapping function be $f: E\rightarrow Q$, such that 
for every $e=(q_i,q_j)$, $f(e)=q_i$ or $f(e)=q_j$, and that for each $q_i\in Q$, $|C(q_i)|\leq r$, where $C(q_i)$ is defined as $\{e|f(e)=q_i\}$.

Once we have the mapping function $f$, we can run Algorithm \ref{alg:twod} to get an approximate solution. 
Although the naive implementation of this algorithm needs $O(rkn)$ running time, by using the aggregate nearest neighbor algorithm, it can be done much more efficiently.
We have the following lemma on the performance of this algorithm.

\begin{algorithm}[!h]
\caption{Algorithm for graph with pseudoarboricity $r$}
\label{alg:twod}
\vspace{0.2cm}
\textbf{Input} Query points $q_1,\cdots,q_k$, the input graph $G=(Q,E)$ with pseudoarboricity $r$\\
\textbf{Output} An Assignment $p_1,\cdots,p_k \in P$ \\
\vspace{-0.5cm}
\begin{algorithmic}[1]

\vspace{0.2cm}
\FOR{$i=1$ \TO $k$}
	\STATE{Assign $p_i \leftarrow 
\min_{p\in P} \dist(q_i,p) + 
\sum_{j:(q_i,q_j)\in C(q_j)} \frac{\dist(p, q_j)}{r+1} 
$}
\ENDFOR
\end{algorithmic}
\end{algorithm}

\begin{lemma}
If $G$ has pseudoarboricity $r$, the solution of Algorithm \ref{alg:twod} gives $2r+1$ approximation to the optimal solution. 
\end{lemma}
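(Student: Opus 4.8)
The plan is to bound the cost of the assignment $p_1,\dots,p_k$ output by Algorithm~\ref{alg:twod} against $\opt=\cost(Q,G,P)$ by exploiting that each $p_i$ is, by construction, a minimizer of the auxiliary objective
\[
g_i(p)\;=\;\dist(q_i,p)+\frac{1}{r+1}\sum_{j\in N(i)}\dist(p,q_j),\qquad N(i):=\{\,j:(q_i,q_j)\in C(q_j)\,\},
\]
so that $N(i)$ collects the neighbors of $q_i$ along edges that $f$ charges to the \emph{other} endpoint. Two counting facts will drive everything. First, every edge of $G$ appears in exactly one pair $(i,j)$ with $j\in N(i)$ --- namely with $i$ the endpoint it is \emph{not} charged to --- so $\sum_i\sum_{j\in N(i)}\phi(i,j)=\sum_{e\in E}\phi(e)$ for any symmetric edge functional $\phi$. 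Second, for each fixed vertex $q_j$, the index $j$ lies in $N(i)$ for exactly $|C(q_j)|\le r$ choices of $i$.

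The first step is to upper bound $\sum_i g_i(p_i)$. Let $p_1^*,\dots,p_k^*$ be an optimal SNN solution, set $y_i=\dist(q_i,p_i^*)$ and $x_e=\dist(p_i^*,p_j^*)$ for $e=(q_i,q_j)$, and write $Y=\sum_i y_i$, $X=\sum_e x_e$, so $\opt=X+Y$. Since $p_i$ minimizes $g_i$ we have $g_i(p_i)\le g_i(p_i^*)$; applying $\dist(p_i^*,q_j)\le\dist(p_i^*,p_j^*)+\dist(p_j^*,q_j)$ termwise and summing over $i$, the first counting fact collapses the pairwise contributions to exactly $X$ and the second bounds the $y_j$-contributions by $rY$, giving
\[
\sum_i g_i(p_i)\;\le\;Y+\frac{X+rY}{r+1}\;=\;\frac{X+(2r+1)Y}{r+1}\;\le\;\frac{2r+1}{r+1}\,\opt .
\]

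The second step is to relate the true cost to $\sum_i g_i(p_i)$. Writing $Y'=\sum_i\dist(q_i,p_i)$, the first counting fact gives $\cost(p_1,\dots,p_k)=Y'+\sum_{(q_i,q_j)\in E}\dist(p_i,p_j)=Y'+\sum_i\sum_{j\in N(i)}\dist(p_i,p_j)$. Bounding $\dist(p_i,p_j)\le\dist(p_i,q_j)+\dist(q_j,p_j)$, the $\dist(p_i,q_j)$ terms reassemble, by the definition of $g_i$, into $(r+1)\bigl(\sum_i g_i(p_i)-Y'\bigr)$, while the $\dist(q_j,p_j)$ terms are bounded by $rY'$ using the second counting fact; the $Y'$ terms cancel exactly since $1-(r+1)+r=0$, leaving $\cost(p_1,\dots,p_k)\le(r+1)\sum_i g_i(p_i)$. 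Combining with the first step, $\cost(p_1,\dots,p_k)\le(r+1)\cdot\frac{2r+1}{r+1}\,\opt=(2r+1)\,\opt$.

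The main obstacle is not any single estimate --- everything is triangle inequality and regrouping --- but keeping the orientation induced by $f$ consistent throughout: one must let $N(i)$ hold exactly the edges charged \emph{away} from $q_i$, so that each edge's pairwise cost is counted once in the second step and the multiplicity of every $y_j$ (and every $\dist(q_j,p_j)$) stays at most $r$. It is also worth noting \emph{why} the weight $\tfrac{1}{r+1}$ in Algorithm~\ref{alg:twod} is the right one: it is precisely the value that makes the nearest-neighbor terms cancel in the second step while keeping the coefficient of $Y$ equal to $2r+1$ in the first. No property of the metric beyond the triangle inequality is used.
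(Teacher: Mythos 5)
Your proof is correct and is essentially the paper's own argument: both rely on the optimality of each $p_i$ for the auxiliary objective $\dist(q_i,p)+\frac{1}{r+1}\sum_{j:(q_i,q_j)\in C(q_j)}\dist(p,q_j)$, the fact that each edge is counted once in the reoriented double sum, the bound $|C(q_j)|\le r$, and two applications of the triangle inequality. You merely split the paper's single chain of inequalities into two modular halves (bounding $\sum_i g_i(p_i)$ by $\frac{2r+1}{r+1}\opt$ and the true cost by $(r+1)\sum_i g_i(p_i)$), which is a presentational difference only.
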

\begin{proof}
Denote the optimal solution as $P^*=\{p_1^*,\cdots,p_k^*\}$. We know the optimal cost is
\begin{align*}
\cost(Q,G,P^*)&=\sum_i \dist(q_i, p_i^*) + \sum_{(q_i,q_j)\in E} \dist(p_i^*,p_j^*)=\sum_i \left (
\dist(p_i^*, q_i)+\sum_{j: (q_i,q_j)\in C(q_j)}
\dist(p_i^*, p_j^*)
\right )
\end{align*}
\noindent Let $\sol$ be the solution reported by Algorithm \ref{alg:twod}. Then we have
\begin{align*}
&\cost(\sol)=
\sum_i \left (\dist(q_i, p_i) + \sum_{j:(q_i,q_j)\in C(q_j)} \dist(p_i,p_j)\right )\\
\leq&
\sum_i \left (\dist(q_i, p_i) + \sum_{j:(q_i,q_j)\in C(q_j)} \dist(p_i,q_j)+
\sum_{j:(q_i,q_j)\in C(q_j)} \dist(q_j,p_j)
\right )\quad\mbox{(by triangle inequality)} \\
\leq& 
\sum_i \left (\dist(q_i, p_i) + \sum_{j:(q_i,q_j)\in C(q_j)} \dist(p_i,q_j)\right )+
r \sum_j \dist(q_j,p_j)\quad\mbox{(by definition of pseudoarboricity)}
\\
=&
(r+1) \sum_i \dist(q_i, p_i) + \sum_{(q_i,q_j)\in C(q_j)} \dist(p_i,q_j)\\
\leq& 
(r+1)\sum_i \left (\dist(q_i, p_i^*) + \sum_{j:(q_i,q_j)\in C(q_j)} \frac{\dist(p_i^*,q_j)}{r+1}\right )\quad\mbox{(by the optimality of $p_i$ in the algorithm)}\\
\leq&
(r+1)\sum_i \left (\dist(q_i, p_i^*) + \sum_{j:(q_i,q_j)\in C(q_j)} \frac{\dist(p_i^*,p_j^*)+ \dist(p_j^*, q_j)}{r+1}\right )
\quad\mbox{(by triangle inequality)}\\
\leq& 
(r+1)\cost(Q,G,P^*)+ 
\sum_i \sum_{j: (q_i,q_j)\in C(q_j)} \dist(p_j^*, q_j)\\
\leq& 
(r+1) \cost(Q,G,P^*)+ 
r\sum_j  \dist(p_j^*, q_j)\quad\mbox{(by definition of pseudoarboricity)}\\
=&
\left (2r+1\right )\cost(Q,G,P^*)\qedhere
\end{align*}
\end{proof}



 \makeatletter\def\@@number#1{#1}\makeatother
 
\bibliographystyle{unsrt}

 \bibliography{biblio}


\end{document}